\begin{document}

\title{Probabilistic Parameterized Polynomial Time\thanks{Research for this paper has been funded through NWO EW TOP grant 612.001.601.}}

\author{Nils Donselaar}

\authorrunning{N. Donselaar}

\institute{Radboud University, Donders Institute for Brain, Cognition and Behaviour\\ Montessorilaan 3, 6525 HR Nijmegen, The Netherlands}

\maketitle

\begin{abstract}
We examine a parameterized complexity class for randomized computation where only the error bound and not the full runtime is allowed to depend more than polynomially on the parameter, based on a proposal by Kwisthout in \cite{kwisthout1,kwisthout2}. We prove that this class, for which we propose the shorthand name \textsf{PPPT}, has a robust definition and is in fact equal to the intersection of the classes \textsf{paraBPP} and \textsf{PP}. This result is accompanied by a Cook-style proof of completeness for the corresponding promise class (under a suitable notion of reduction) for parameterized approximation versions of the inference problem in Bayesian networks, which is known to be \textsf{PP}-complete. With these definitions and results in place, we proceed by showing how it follows from this that derandomization is equivalent to efficient deterministic approximation methods for the inference problem. Furthermore, we observe as a straightforward application of a result due to Drucker in \cite{drucker} that these problems cannot have polynomial size randomized kernels unless the polynomial hierarchy collapses to the third level. We conclude by indicating potential avenues for further exploration and application of this framework.

\keywords{Parameterized complexity theory \and Randomized computation \and Bayesian networks.}
\end{abstract}

\section{Preliminaries}\label{sec:intro}

The simple yet powerful idea which lies at the heart of the theory of parameterized complexity is that the hardness of computational problems may be better studied by analyzing the effects of particular aspects of its instances, treating these as a distinguished problem parameter and allowing the time (or other measures such as space) required to find a solution to depend on this parameter by an unbounded factor. This leads to an account of \textit{fixed-parameter tractability} (\textsf{FPT}) and a hardness theory based on classes belonging to the \textsf{W}-\textit{hierarchy} which together mirror the parts played by \textsf{P} and \textsf{NP} in classical complexity theory.\\
\\
In the two decades since the appearance of \cite{downeyfellows1}, the book by Downey and Fellows which largely formed the foundations of the field, research in parameterized complexity theory has gone far beyond this initial outlook and revealed a rich structure and many interesting questions to pursue, a lot of which covered in the updated \cite{downeyfellows2}. Yet with a few notable exceptions, little attention has been paid to probabilistic computation in the parameterized setting. The most encompassing effort thus far has been made by Montaya and M\"{u}ller in \cite{montoyamuller}, where they show amongst other things that the natural analogue \textsf{BPFPT} relates to other complexity classes in much the same way as does \textsf{BPP} in the classical setting.\footnote{We also mention \cite{chauhanrao} which studies \textsf{PFPT}, the parameterized counterpart to \textsf{PP}.}\\
\\
Our aim with the present paper is to improve on this situation by demonstrating that studying parameterized probabilistic computation amounts to more than simply reconstructing results from the classical setting (which can already be a non-trivial task, as evidenced by the work done in \cite{montoyamuller}), and that results obtained in this way can have broader theoretical and practical significance. In particular we study a complexity class intended to capture \textit{probabilistic parameterized polynomial time computability}, which we shall refer to as \textsf{PPPT} for this reason.\footnote{In \cite{kwisthout1,kwisthout2} this class was proposed under the name \textsf{FERT} for \textit{fixed-error randomized tractability}, intended to be reminiscent of \textsf{FPT}. We believe that the name \textsf{PPPT} is more appropriate as it calls into mind the class \textsf{PP} as well as ppt-reductions.}\\
\\
This class \textsf{PPPT} is informally defined by considering probabilistic algorithms for parameterized problems, except allowing not the runtime but instead only the error bound to depend on the parameter by more than a polynomial factor. As such, \textsf{PPPT} can be thought of as containing those problems in \textsf{PP} which are nevertheless close to being in \textsf{BPP} and hence randomized tractable in a certain sense. This perspective, which we will explore more rigorously later on, formed much of the motivation of the class's original proposal in \cite{kwisthout1}.\\
\\
In what follows, we assume the reader to be familiar with the basics of classical and parameterized complexity theory. However, we repeat the definitions of the complexity classes used here, mostly to facilitate the comparison with the class \textsf{PPPT} which we formally define in the next section. First recall the probabilistic complexity classes \textsf{BPP} and \textsf{PP}:

\begin{definition}\label{def:BPP}
\emph{\textsf{BPP}} is the class of decision problems computable in time \emph{$|x|^c$} for some constant $c$ by a probabilistic Turing machine which gives the correct answer with probability more than $\frac{1}{2} + |x|^{-d}$ for some constant $d$.
\end{definition}

\begin{definition}\label{def:PP}
\emph{\textsf{PP}} is the class of decision problems computable in time $|x|^c$ for some constant $c$ by a probabilistic Turing machine which gives the correct answer with probability more than $\frac{1}{2}$.
\end{definition}

We mostly follow the original definition presented in \cite{gill} in that we consider a probabilistic Turing machine $\mathcal{M}$ to be a Turing machine with access to random bits which it may query at every step of its execution, and whose transition function may depend on the values read off in this way. However, we include the generalization that a probabilistic Turing machine $\mathcal{M}$ may query not one but $r_{\mathcal{M}}$ many random bits at each step, where as usual we drop the subscript when it can be inferred from the context. Before continuing we note that $r_{\mathcal{M}} \leq \log |\mathcal{M}|$.

\begin{definition}\label{def:FPT}
\emph{\textsf{FPT}} is the class of parameterized decision problems computable in time $f(k)|x|^c$, where $f$ is a computable function in $k$ and $c$ is a constant.
\end{definition}

\begin{definition}\label{def:paraBPP}
\emph{\textsf{paraBPP}} is the class of parameterized decision problems computable in time $f(k)|x|^c$ by a probabilistic Turing machine which gives the correct answer with probability more than $\frac{1}{2} + |x|^{-d}$, where $f$ is a computable function in $k$ and $c$ and $d$ are constants.
\end{definition}

The method of converting classical complexity classes \textsf{C} to parameterized classes \textsf{paraC} illustrated above originates from \cite{flumgrohe}, and indeed $\mathsf{FPT} = \mathsf{paraP}$. One should keep in mind though that this construction does not yield the usual parameterized classes including \textsf{BPFPT}, as these are furthermore characterized by using at most $f(k)\log|x|$ random bits. (cf. \cite{chenflumgrohe2}). In fact, as observed in \cite{montoyamuller}, $\mathsf{BPFPT} = \mathsf{paraBPP}$ if and only if $\mathsf{P} = \mathsf{BPP}$. As we shall see in the next section, a similar statement remains true when we replace \textsf{paraBPP} by the class \textsf{PPPT}.

\section{Error Parameterization}\label{sec:PPPT}

We now provide a formal definition of the class \textsf{PPPT}:

\begin{definition}\label{def:PPPT}
We say that a parameterized decision problem $A$ is in \emph{\textsf{PPPT}} if there exist a computable function $f: \mathbb{N} \rightarrow
(0,\frac{1}{2}]$, a constant $c\in\mathbb{N}$ and a probabilistic Turing machine $\mathcal{M}$ which on input $(x,k)$ halts in time $(|x|+k)^c$ with probability at least $\frac{1}{2} + f(k)$ of giving the correct answer.
\end{definition}

Based on the convention that the parameter value $k$ is given as a unary string along with the rest of the input $x$, from this definition it is immediate that \textsf{PPPT} is a subclass of \textsf{PP}. Moreover, it can be shown that this definition is robust in two ways, which makes it easy to see that \textsf{PPPT} is also a subclass of \textsf{paraBPP}.

\begin{proposition}\label{prop:PPPT}
The class \textup{\textsf{PPPT}} in Definition~\ref{def:PPPT} remains the same if\\
(i) the error bound on a correct decision is $f(k)|x|^{-d}$ or $\textup{min}(f(k), |x|^{-d})$ instead; (ii) the probability
of a false positive is not bounded away from $\frac{1}{2}$.
\end{proposition}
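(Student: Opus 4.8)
The plan is to prove both robustness claims by reducing them to a short chain of inclusions, so that only a single amplification argument carries real weight. Write $\mathsf{PPPT}_{\mathrm{prod}}$ and $\mathsf{PPPT}_{\min}$ for the variants of Definition~\ref{def:PPPT} in which the advantage $f(k)$ is replaced by $f(k)|x|^{-d}$ and by $\min(f(k),|x|^{-d})$ respectively, and $\mathsf{PPPT}_{\mathrm{ii}}$ for the variant in which only yes-instances must be accepted with probability at least $\tfrac12+f(k)$ while no-instances are merely rejected with probability exceeding $\tfrac12$. For part (i) the easy inclusions are free from the pointwise inequalities $f(k)\ge\min(f(k),|x|^{-d})\ge f(k)|x|^{-d}$, valid for $|x|\ge 1$ and $f(k)\le\tfrac12$ (the last step since $f(k)|x|^{-d}=\min(f(k),|x|^{-d})\cdot\max(f(k),|x|^{-d})\le\min(f(k),|x|^{-d})$): a machine achieving the stronger advantage automatically achieves the weaker one, giving $\mathsf{PPPT}\subseteq\mathsf{PPPT}_{\min}\subseteq\mathsf{PPPT}_{\mathrm{prod}}$. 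Thus all of part (i) reduces to the reverse inclusion $\mathsf{PPPT}_{\mathrm{prod}}\subseteq\mathsf{PPPT}$.

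For that inclusion I would amplify by independent repetition and majority vote. Given $\mathcal M$ running in time $(|x|+k)^c$ with advantage $\varepsilon:=f(k)|x|^{-d}$, run it $t$ times on $(x,k)$ and output the majority answer; the running time stays polynomial in $(|x|+k)$ exactly as long as $t=(|x|+k)^{O(1)}$. The natural choice is $t=\Theta((|x|+k)^{2d})$, so that $\sqrt t\,\varepsilon\ge f(k)$ is bounded below by a quantity depending on $k$ alone. \emph{Here lies the main obstacle.} One cannot simply invoke Chernoff/Hoeffding: with this $t$ it only yields an error estimate of $e^{-\Theta(t\varepsilon^2)}=e^{-\Theta(f(k)^2)}$, which is useless (arbitrarily close to $1$) once $f(k)$ is small, and reaching the large-deviation regime where that bound becomes nontrivial would require $t=\Theta(1/f(k)^2)$ repetitions --- super-polynomial in the parameter, hence forbidden. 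Instead I must stay in the moderate-deviation regime and supply a genuine lower bound on the majority advantage of anti-concentration (central-limit) type, showing that with $\sqrt t\,\varepsilon\ge f(k)$ the majority is correct with advantage at least $c\,f(k)$ in that regime, and at least a positive constant beyond it. Controlling the approximation by a Berry--Esseen-style estimate (or a direct bound on the relevant binomial tail) yields the required uniform bound; the finitely many shortest inputs, where the asymptotic estimate is not yet in force, can be decided by table lookup. The amplified machine then runs in polynomial time with advantage $g(k)>0$ for a computable $g$, placing the problem in $\mathsf{PPPT}$ and closing the chain $\mathsf{PPPT}\subseteq\mathsf{PPPT}_{\min}\subseteq\mathsf{PPPT}_{\mathrm{prod}}\subseteq\mathsf{PPPT}$.

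For part (ii) the inclusion $\mathsf{PPPT}\subseteq\mathsf{PPPT}_{\mathrm{ii}}$ is immediate. For $\mathsf{PPPT}_{\mathrm{ii}}\subseteq\mathsf{PPPT}$ I would reintroduce a gap on the no-side by \emph{rebalancing}: run the given one-sided machine and, with a small probability $p$, override its verdict to rejection. This lowers the acceptance probability of every no-instance by about $\tfrac p2$ while costing the yes-side at most roughly $p$, so that for $p$ comfortably below $f(k)$ a genuine two-sided gap appears. By part (i) it suffices to reach a symmetric advantage of the robust inverse-polynomial form, which lets me take $p$ of order $|x|^{-d}$, a bias realizable with $O(\log|x|)$ fair coin flips and, crucially, without ever evaluating $f$. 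The chief difficulty of this half is that the override must never swamp the (possibly minuscule, possibly hard-to-compute) yes-side gap $f(k)$ \emph{uniformly} in the input: when $|x|$ is small relative to $1/f(k)$ an inverse-polynomial nudge would already destroy the yes-side. To navigate this I would exploit that $\mathcal M$ tosses only $R=(|x|+k)^{O(1)}$ coins, so its acceptance probability is a multiple of $2^{-R}$ and the false-positive probability is automatically at least $2^{-R}$ below $\tfrac12$; a case distinction according to whether this automatic gap already dominates $f(k)$ --- in which case the unmodified machine already witnesses membership --- or not, combined with the inverse-polynomial nudge, is what I expect to cover all inputs. Making the rebalancing simultaneously oblivious to $f$ and uniform in the input length is the part that will require the most care.
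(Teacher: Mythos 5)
Your treatment of part (i) is organized sensibly (the chain $\mathsf{PPPT}\subseteq\mathsf{PPPT}_{\min}\subseteq\mathsf{PPPT}_{\mathrm{prod}}\subseteq\mathsf{PPPT}$ and the easy pointwise inequalities are fine), and the obstacle you identify is real: with $t=\Theta(|x|^{2d})$ repetitions Hoeffding only gives $e^{-\Theta(f(k)^2)}$. But your proposed repair --- a Berry--Esseen/anti-concentration lower bound on the majority's advantage in the moderate-deviation regime --- is heavier machinery than needed and is left as a sketch. The paper avoids it entirely with a case split: lower-bound the advantage $f(k)|x|^{-d}$ by $\min(f(k)^2,|x|^{-2d})$; if the minimum is $f(k)^2$ the advantage already depends on $k$ alone and no amplification is required, while if it is $|x|^{-2d}$ then $n=|x|^{4d}$ repetitions place Hoeffding exactly at $n\epsilon^2=1$, giving failure probability $e^{-2}\le\frac12-f(k)^2$. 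That is, the regime where Chernoff-type bounds are useless is precisely the regime where nothing needs to be amplified, so the target advantage $f(k)^2$ (a function of $k$ alone, which is all \textsf{PPPT} asks for) is reached with only the standard upper tail bound. Your route can be completed, but you would have to actually prove or cite the binomial anti-concentration estimate you invoke.

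Part (ii) contains a genuine gap. The paper's fix is to reject an accepting run with probability $\frac12 f(k)|x|^{-d}$ --- half of the \emph{guaranteed yes-side advantage} --- so the nudge can by construction never swamp the yes-side and no case distinction arises; one then lands in a form covered by part (i). Your insistence on a nudge oblivious to $f$ creates exactly the problem you flag, and your escape route does not close it. First, the ``automatic gap'' of $2^{-R}$ below $\frac12$ exists only if every no-instance is accepted with probability \emph{strictly} less than $\frac12$; the relaxation in (ii), read as in the paper's own proof (``with probability at most $\frac12$'', in analogy with $\mathsf{PP}$), permits acceptance probability exactly $\frac12$, in which case a short no-instance handled by your unmodified machine has no gap at all. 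Second, even granting strictness, $2^{-R}=2^{-(|x|+k)^{O(1)}}$ becomes a function of $k$ alone only after restricting to $|x|\le h(k)$, and deciding whether you are in that case requires comparing $|x|$ against a quantity derived from $f(k)$ --- so $f$ must be evaluated anyway, defeating the stated purpose of obliviousness without simplifying anything. You yourself label this step as unresolved (``is what I expect to cover all inputs''); as written the argument does not cover all inputs, and the missing idea is simply to scale the rejection probability by the known yes-side gap rather than by an inverse polynomial in $|x|$.
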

\begin{proof}
For (i), note that $\text{min}(f(k)^2, |x|^{-2d}) \leq f(k)|x|^{-d} \leq \text{min}(f(k), |x|^{-d})$, which shows that such error bounds may be used interchangeably. For independent identically distributed Bernoulli variables with $p = \frac{1}{2}+\epsilon$, Hoeffding's inequality states that the probability of the average over $n$ trials being no greater than $\frac{1}{2}$ is at most $e^{-2n\epsilon^2}$. Thus for $\epsilon = \text{min}(f(k)^2, |x|^{-2d})$ we may take the majority vote over $n = |x|^{4d}$ runs to obtain a probability of at least $\frac{1}{2} + f(k)^2$ of giving the right answer. To see this, observe that the claim is trivially true whenever $\text{min}(f(k)^2, |x|^{-2d}) = f(k)^2$, while if $\text{min}(f(k)^2, |x|^{-2d}) = |x|^{-2d}$ then we should have $\frac{1}{2}-f(k)^2 \geq e^{-2}$, which is always the case since $f(k)^2 \leq \frac{1}{4}$ by definition. As this is only a polynomial number of repetitions, any error bound of either of these two alternative forms can be amplified to conform to the one required by Definition~\ref{def:PPPT} without exceeding the time restrictions.\\
\\
For (ii), we provide essentially the same argument as can be used to show that the class \textsf{PP} remains the same under this less strict requirement. Suppose that $\mathcal{M}$ is a probabilistic Turing machine which decides the problem $A$ in time $(|x|+k)^c$ for some $c$, with suitable $f$ and $d$ given such that \textit{Yes}-instances are accepted with probability at least $\frac{1}{2} + f(k)|x|^{-d}$ and \textit{No}-instances with probability at most $\frac{1}{2}$. Then we may consider a probabilistic Turing machine $\mathcal{M}'$ which on input $(x,k)$ operates the same up to where $\mathcal{M}$ would halt, at which point the outcome of $\mathcal{M}$ is processed as follows. In case of a rejection, the outcome is simply preserved, while an acceptance is rejected with probability $\frac{1}{2}f(k)|x|^{-d}$ (which is possible within the original polynomial time bound). Now a \textit{Yes}-instance $(x,k)$ of $\mathcal{M}$ is accepted by $\mathcal{M}'$ with probability at least $(\frac{1}{2} + f(k)|x|^{-d})(1-\frac{1}{2}f(k)|x|^{-d}) \geq \frac{1}{2} + \frac{1}{4}f(k)|x|^{-d}$, while \textit{No}-instances are accepted with probability at most $\frac{1}{2}-\frac{1}{4}f(k)|x|^{-d}$. Thus $\mathcal{M'}$ halts in time polynomial in $|x|+k$ and bounds the probability of a false positive away from $\frac{1}{2}$ by some $f(k)|x|^{-d}$ as required, which concludes the proof.
\end{proof}

One can extend the probabilistic amplification used in the proof of Proposition~\ref{prop:PPPT}(i) to (instead) remove the term $f(k)$ from the error bound, at the cost of introducing this factor into the runtime. This is generally not allowed for \textsf{PPPT}, but permissible for \textsf{paraBPP}, hence \textsf{PPPT} is a subclass of \textsf{paraBPP}. In fact, this inclusion is easily shown to be strict.

\begin{proposition}\label{prop:strict}
$\mathsf{PPPT} \subsetneq \mathsf{paraBPP}$, i.e.~$\mathsf{PPPT}$ is a strict subclass of $\mathsf{paraBPP}$.
\end{proposition}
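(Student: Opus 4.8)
The plan is to leverage the two inclusions already in hand: $\mathsf{PPPT} \subseteq \mathsf{PP}$, which is immediate from the unary encoding of $k$, together with the inclusion $\mathsf{PPPT} \subseteq \mathsf{paraBPP}$ just established. Since every $\mathsf{PPPT}$ problem is, under the classical encoding $(x,1^k)$, a member of $\mathsf{PP}$, it suffices to exhibit a single parameterized problem lying in $\mathsf{paraBPP}$ whose classical encoding falls outside $\mathsf{PP}$. Conceptually this is exactly the feature separating the two classes: $\mathsf{paraBPP}$ tolerates an arbitrary computable factor $f(k)$ in the \emph{runtime}, whereas $\mathsf{PPPT}$ confines all superpolynomial dependence on $k$ to the error bound and keeps the runtime polynomial in $|x|+k$. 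This freedom lets $\mathsf{paraBPP}$ absorb problems of arbitrarily high classical complexity.

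To build the witness I would first invoke the standard hierarchy theorems to fix a decidable language $Q \subseteq \{0,1\}^*$ with $Q \notin \mathsf{PP}$; for instance, since $\mathsf{PP} \subseteq \mathsf{PSPACE} \subsetneq \mathsf{EXPSPACE}$ by the space hierarchy theorem, any $\mathsf{EXPSPACE}$-complete language serves. I then parameterize $Q$ by instance size, setting $A = \{(x,k) : k = |x| \text{ and } x \in Q\}$. To decide $(x,k) \in A$ one rejects immediately when $k \neq |x|$, and otherwise runs the decision procedure for $Q$; once $k = |x|$ is fixed, the latter halts within some computable bound $f(k)$. Hence $A \in \mathsf{FPT} \subseteq \mathsf{paraBPP}$, the last inclusion being trivial as a deterministic machine errs with probability $0 < \frac{1}{2} - |x|^{-d}$.

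It then remains to argue that the classical encoding $\widehat{A} = \{(x,1^{|x|}) : x \in Q\}$ is not in $\mathsf{PP}$. The key observation is that $x \mapsto (x,1^{|x|})$ is a polynomial-time many-one reduction from $Q$ to $\widehat{A}$ which only doubles the input length; since $\mathsf{PP}$ is closed under polynomial-time many-one reductions, $\widehat{A} \in \mathsf{PP}$ would force $Q \in \mathsf{PP}$, contradicting the choice of $Q$. Thus $\widehat{A} \notin \mathsf{PP}$, and because $\mathsf{PPPT} \subseteq \mathsf{PP}$ we obtain $A \notin \mathsf{PPPT}$, which together with $A \in \mathsf{paraBPP}$ gives the strict inclusion. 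The one point demanding care, and the main obstacle, is ensuring that the unary padding by $1^{|x|}$ does not artificially drag $\widehat{A}$ back into $\mathsf{PP}$: this is precisely what the length-preserving (up to a constant factor) nature of the reduction guarantees, so that a language chosen outside $\mathsf{PP}$ stays outside $\mathsf{PP}$ after parameterization.
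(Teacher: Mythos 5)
Your proposal is correct and follows essentially the same route as the paper: take a decidable language outside $\mathsf{PP}$, parameterize it by input size so that it lands in $\mathsf{FPT}\subseteq\mathsf{paraBPP}$, and observe that it cannot be in $\mathsf{PPPT}$ because $\mathsf{PPPT}\subseteq\mathsf{PP}$. You simply spell out details the paper leaves implicit (a concrete choice of witness via the space hierarchy theorem, and the check that the unary padding preserves non-membership in $\mathsf{PP}$), and these details are accurate.
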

\begin{proof}
Let $A$ be any decidable problem not in \textsf{PP}, and parameterize this problem by the input size, so that we obtain the problem
$|x|$-$A$. It is immediate that $|x|$-$A$ cannot be in \textsf{PPPT} since $A$ is not in \textsf{PP}. On the other hand, $|x|$-$A$ is
in $\mathsf{paraBPP}$ as the parameterization permits an arbitrary runtime for the decision procedure for $A$, hence the inclusion $\mathsf{PPPT}\subseteq \mathsf{paraBPP}$ is strict.
\end{proof}

Note that what the proof above actually shows is that there are problems in \textsf{FPT} which are not in \textsf{PPPT}. This tells us that
the inclusion in the statement remains strict even if $\mathsf{BPP} = \mathsf{P}$, in which case we have that $\mathsf{paraBPP} =
\mathsf{FPT}$: see Proposition~5.1 of \cite{montoyamuller}. Similar arguments can be used to show that $\mathsf{PPPT} \subseteq \mathsf{FPT}$ implies $\mathsf{BPP} = \mathsf{P}$. Most importantly however, we can extend these results by providing an exact characterization of \textsf{PPPT}, the proof of which relies on essentially the same strategy used to show that every problem in \textsf{FPT} is kernelizable.

\begin{theorem}\label{thm:PPPT}
$\mathsf{PPPT} = \mathsf{paraBPP}\cap\mathsf{PP}$. In particular, if a problem $k$-$A$ is in $\mathsf{paraBPP}$ and also in $\mathsf{PP}$ as an unparameterized problem, then $k$-$A$ is in $\mathsf{PPPT}$.
\end{theorem}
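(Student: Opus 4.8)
The inclusion $\mathsf{PPPT} \subseteq \mathsf{paraBPP} \cap \mathsf{PP}$ has already been established in the discussion preceding the statement, so the plan is to prove the reverse inclusion: given $k$-$A \in \mathsf{paraBPP} \cap \mathsf{PP}$, I would build a \textsf{PPPT} machine for it by a case distinction on the size of $|x|$ relative to the parameter-dependent factor, mirroring the standard proof that every \textsf{FPT} problem is kernelizable. Fix a \textsf{paraBPP} machine $\mathcal{M}_1$ running in time $g(k)|x|^{c_1}$ with correctness probability at least $\frac{1}{2} + |x|^{-d_1}$, and a \textsf{PP} machine $\mathcal{M}_2$ running in time $(|x|+k)^{c_2}$ with correctness probability above $\frac{1}{2}$ (since $k$ is encoded in unary this is genuinely polynomial in the total input length). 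The guiding idea is that $\mathcal{M}_1$ is efficient precisely when $|x|$ is large relative to $g(k)$, whereas when $|x|$ is small the entire behaviour of $\mathcal{M}_2$ becomes a function of $k$ alone.

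First I would record the normal form that $\mathcal{M}_1$ may be assumed to run for a number of steps depending only on $(x,k)$ and not on its coin tosses: padding to a coin-independent clock $g'(k)|x|^{c_1}$, for some computable $g' \geq g$, costs the machine the time to evaluate $g(k)$, but this overhead is absorbed into the parameter-dependent factor and so is harmless for \textsf{paraBPP}. The \textsf{PPPT} machine $\mathcal{M}$ then proceeds as follows on input $(x,k)$: simulate $\mathcal{M}_1$ under an \emph{external} clock of $|x|^{c_1+1}$ steps. If $\mathcal{M}_1$ halts within this budget---which, as its running time is now coin-independent, happens for all coin strings exactly when $g'(k) \leq |x|$---output its answer. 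Otherwise the clock expires, which certifies $g'(k) > |x|$ without $\mathcal{M}$ ever having to compute $g'(k)$; in that case run $\mathcal{M}_2$ on fresh coins and output its answer. In either branch $\mathcal{M}$ runs in time polynomial in $|x|+k$.

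The heart of the argument is the analysis of the second branch. Here the clock has certified $|x| < g'(k)$, so the running time of $\mathcal{M}_2$ is at most $(g'(k)+k)^{c_2}$ and the number $N$ of random bits it reads is at most $r(g'(k)+k)^{c_2}$, where $r \leq \log|\mathcal{M}_2|$ is constant. Since a \textsf{PP} machine reading $N$ random bits has acceptance probability an integer multiple of $2^{-N}$, its gap away from $\frac{1}{2}$ on either kind of instance is at least $2^{-N}$; hence in this branch the correctness probability is at least $\frac{1}{2} + f(k)$ for the computable function $f(k) = \min\!\bigl(\tfrac{1}{2},\, 2^{-r(g'(k)+k)^{c_2}}\bigr)$, which indeed maps $\mathbb{N}$ into $(0,\frac{1}{2}]$. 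In the first branch the correctness probability is at least $\frac{1}{2} + |x|^{-d_1}$. Combining the two, $\mathcal{M}$ is correct with probability at least $\frac{1}{2} + \min(f(k),|x|^{-d_1})$ on every input, so by Proposition~\ref{prop:PPPT}(i) the problem $k$-$A$ lies in \textsf{PPPT}.

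I expect the main obstacle to be the bookkeeping in the second branch rather than any single hard step: one must argue that confining $|x|$ below $g'(k)$ converts the \textsf{PP} acceptance gap---a priori only known to be positive and possibly exponentially small in $|x|$---into a quantity bounded below by a computable function of $k$, which is exactly the role played by ``the instance itself is a kernel'' in the \textsf{FPT} argument. The coin-independence normal form for $\mathcal{M}_1$ and the device of an external clock (so that $\mathcal{M}$ never incurs the potentially astronomical cost of evaluating $g'(k)$) are the two technical points that make the case distinction well-defined; everything else reduces to the amplification already recorded in Proposition~\ref{prop:PPPT}.
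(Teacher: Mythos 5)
Your proof is correct and follows essentially the same route as the paper: clock the \textsf{paraBPP} machine for $|x|^{c_1+1}$ steps, fall back to the \textsf{PP} machine when the clock expires (which certifies $|x| < g'(k)$), and lower-bound the \textsf{PP} gap by $2^{-N}$ via the discreteness of the acceptance probability, finishing with Proposition~\ref{prop:PPPT}(i). Your explicit coin-independence normalization of the \textsf{paraBPP} machine's running time is a technical point the paper leaves implicit, but it does not change the argument.
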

\begin{proof}
Suppose $k$-$A$ is as in the statement of the theorem. Then there is a probabilistic Turing machine $\mathcal{M}$ which on input $(x,k)$
halts in time $f(k)|x|^c$ and makes the correct decision with probability at least $\frac{1}{2} + |x|^{-d}$. Furthermore, there is a probabilistic Turing machine $\mathcal{M}'$ which on input $x$ halts in time $|x|^{c'}$ and makes the correct decision with probability at least $\frac{1}{2} +
2^{-r|x|^{c'}}$. Then on any given $(x,k)$, we can first run $\mathcal{M}$ for $|x|^{c+1}$ steps, adopting its decision if it halts within that time. If it does, then we have given the correct answer with probability at least $\frac{1}{2} + |x|^{-d}$. If it does not, then we may conclude that $f(k) > |x|$, in which case we proceed by running $\mathcal{M}'$ which will halt in time $|x|^{c'}$ with a probability of at least $\frac{1}{2}+2^{-r\cdot f(k)^{c'}} = \frac{1}{2}+ g(k)$ of giving the correct answer. Thus in time $\mathcal{O}(|x|^{c+c'+1})$ we can decide with probability at least $\frac{1}{2} + \text{min}(g(k),|x|^{-d})$ whether $(x,k)$ is in $k$-$A$, which means $k$-$A$ is in \textsf{PPPT} by Proposition~\ref{prop:PPPT}. As we already observed that \textsf{PPPT} is a subclass of both \textsf{paraBPP} and \textsf{PP}, this yields the conclusion that $\mathsf{PPPT} = \mathsf{paraBPP}\cap\mathsf{PP}$ as stated.
\end{proof}

For context it may be valuable to note that the idea of a complexity class being the intersection of a parameterized and a classical one has been considered before: while \cite{caichendowneyfellows} looked into $\mathsf{W[P]}\cap\mathsf{NP}$, \cite{montoyamuller} briefly discussed $\mathsf{BPFPT}\cap\mathsf{BPP}$. However, we believe it is important to note first of all that the class \textsf{PPPT} arose from a reasonably natural definition intended to capture a slightly weaker kind of parameterization, instead of its definition being explicitly constructed to ensure a correspondence to the intersection of \textsf{paraBPP} and \textsf{PP}. Furthermore, and perhaps most importantly, we can actually exhibit natural problems which are complete for (the promise version of) the class \textsf{PPPT}, something which has not yet been done for \textsf{BPP}. In the remainder of this section we shall describe these problems, which originate from the domain of approximate Bayesian inference. We subsequently prove completeness for these problems in Section~\ref{sec:redux}, which sets us up to explore the main implications of our results in Section~\ref{sec:applic}.\\
\\
Below we provide a definition of a Bayesian network, so that we may introduce the problem of inference within such networks; for the reader interested in a more detailed treatment, we refer to a standard textbook such as \cite{kollerfriedman}.

\begin{definition}\label{def:BN}
A \textit{Bayesian network} is a pair $\mathcal{B} = (G, \textup{Pr})$, where $G = (V,A)$ is a directed acyclic graph whose nodes represent statistical variables, and $\textup{Pr}$ is a set of families of probability distributions containing for each node $V$, and each possible configuration $c_{\rho(V)}$ of the variables represented by its parents, a distribution $\textup{Pr}(c_V \mid c_{\rho(V)})$ over the possible outcomes of its represented variable.
\end{definition}

As reflected by the notation, one typically blurs the distinction between the node $V$ and the statistical variable which it represents, so that one may use $\mathrm{\Omega}(V)$ to refer directly to the set of possible outcomes of this variable.\\
\\
One of the main computational problems associated with Bayesian networks is that of \textit{inference}, which is to determine what the likelihood is of some given combination of outcomes, possibly conditioned on certain specified outcomes for another set of variables. Below we describe the corresponding decision problem.\\
\\
\textsc{Bayesian Inference}\\
\textit{Input:} A Bayesian network $\mathcal{B} = (G, \text{Pr})$, two sets of variables $H,E\subseteq V(G)$, joint value assignments $h\in\mathrm{\Omega}(H)$ and $e\in\mathrm{\Omega}(E)$, a rational threshold value $0\leq q\leq 1$.\\
\textit{Question:} Is $\text{Pr}(h\mid e) > q$?\\
\\
Based on whether $E=\emptyset$, we shall refer to this problem as \textsc{Inference} or \textsc{Conditional Inference} respectively, both of which are \textsf{PP}-complete (see also \cite{parkdarwiche}). Despite their equivalence from a classical perspective, in terms of parameterized approximability the latter is more difficult: \cite{kwisthout2} has an overview of the main results known thus far. We consider the following specific parameterizations:\\
\\
$\epsilon$-\textsc{Inference}\\
\textit{Input:} A Bayesian network $\mathcal{B} = (G, \text{Pr})$, a set of variables $H\subseteq V(G)$, a joint value assignment $h\in\mathrm{\Omega}(H)$, and rational values $0\leq q\leq 1$ and $0 < \epsilon \leq \frac{1}{2}$.\\
\textit{Parameter:} $\lceil - \log \epsilon \rceil$.\\
\textit{Promise:} $\text{Pr}(h)\not\in(q-\epsilon,q+\epsilon)$.\\
\textit{Question:} Is $\text{Pr}(h) > q$?\\
\\
$\{\epsilon, \text{Pr}(h),\text{Pr}(e)\}$-\textsc{Conditional Inference}\\
\textit{Input:} A Bayesian network $\mathcal{B} = (G, \text{Pr})$, two sets of variables $H,E\subseteq V(G)$, joint value assignments $h\in\mathrm{\Omega}(H)$, $e\in\mathrm{\Omega}(E)$, rational values $0\leq q\leq 1$, $\epsilon>0$.\\
\textit{Parameters:} $\epsilon$, $\text{Pr}(h)$, and $\text{Pr}(e)$.\\
\textit{Promise:} $\text{Pr}(h)\not\in(q(1+\epsilon)^{-1},q(1+\epsilon))$.\\
\textit{Question:} Is $\text{Pr}(h\mid e) > q$?\\
\\
In what follows we use $\text{Pr}(h)\pm\epsilon \geq q$ as a shorthand for $\text{Pr}(h)>q$ with the promise that $\text{Pr}(h)\not\in(q-\epsilon,q+\epsilon)$, and similarly for the relative approximation; for more on this approach, see \cite{marx}. For completeness' sake, we remind the reader of the definition of a promise problem, based on \cite{goldreich}.

\begin{definition}\label{def:propro}
A promise problem $A$ consists of disjoint sets $A_{\textit{Yes}}$ and $A_{\textit{No}}$ of \textit{Yes} and \textit{No}-instances respectively, where $A_{\textit{Yes}}\cup A_{\textit{No}}$ may be a strict subset of the set of all inputs; $A_{\textit{Yes}}\cup A_{\textit{No}}$ is called the promise of the problem $A$.
\end{definition}

We can separate promise problems into complexity classes just as we do for decision problems by converting the familiar definitions in the following way.

\begin{definition}\label{def:proclass}
Let $C$ be any complexity class of decision problems. The class $pC$ of promise problems is defined by applying $C$'s criteria of membership to promise problems instead, e.g.~$pP$ is the class of promise problems $A$ for which there exists a polynomial-time algorithm which answers correctly on its promise.
\end{definition}

We now show for both of the approximate inference problems given above that they are in the promise version of \textsf{PPPT}, the former by an explicit argument, the latter by an appeal to Theorem~\ref{thm:PPPT}.

\begin{proposition}\label{prop:epinf}
$\epsilon$-\textsc{Inference} is in \textup{\textsf{pPPPT}}.
\end{proposition}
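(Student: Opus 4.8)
The plan is to exhibit a probabilistic Turing machine which, by \emph{ancestral (forward) sampling} from the network, reduces the question $\text{Pr}(h) > q$ to a single threshold comparison, in the spirit of the standard proof that \textsc{Inference} is in \textsf{PP}. First I would process the nodes of $G$ in topological order, drawing at each node a value from its conditional distribution given the values already assigned to its parents; since $G$ is acyclic this produces, in one pass, a joint assignment distributed exactly according to $\text{Pr}$. Setting $X = 1$ when the sampled configuration agrees with $h$ on $H$ and $X = 0$ otherwise yields a Bernoulli variable with $\mathbb{E}[X] = \text{Pr}(h) =: p$, computed in time polynomial in the size of $\mathcal{B}$.

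Next I would convert this estimator into an acceptance probability that crosses $\tfrac12$ exactly at $p = q$. Draw, independently of $X$, a coin $Y$ with $\Pr[Y = 1] = q$ and a fair coin $Z$, and accept precisely when $X = 1,\, Y = 0$, or when $X = Y$ and $Z = 1$ (rejecting in the two complementary cases). A short calculation gives $\Pr[\text{accept}] = \tfrac12 + \tfrac{p - q}{2}$, so the machine answers correctly with probability $\tfrac12 + \tfrac{|p - q|}{2}$. The promise $\text{Pr}(h) \notin (q - \epsilon, q + \epsilon)$ gives $|p - q| \geq \epsilon$, and since the parameter is $k = \lceil -\log \epsilon \rceil$ we have $\epsilon \geq 2^{-k}$; hence the advantage is at least $\tfrac{\epsilon}{2} \geq 2^{-(k+1)}$. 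Taking $f(k) = 2^{-(k+1)}$, a computable map into $(0, \tfrac12]$, meets Definition~\ref{def:PPPT}, and since the procedure need only answer correctly on the promise this places $\epsilon$-\textsc{Inference} in \textsf{pPPPT}.

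The main obstacle is not the combinatorics but the demand that everything fit a \emph{strict} polynomial time bound, as \textsf{PPPT}, unlike an expected-time class, tolerates no unbounded tail. The conditional distributions and the threshold $q$ are supplied as rationals, and sampling a coin of rational bias from uniform random bits is most naturally done by rejection sampling, whose running time is only polynomial \emph{in expectation}. I would handle this by truncating each such subroutine after polynomially many trials and defaulting its output; capping the trial count at a quantity polynomial in $k$ bounds the total deviation of the realized acceptance probability from $\tfrac12 + \tfrac{p-q}{2}$ by at most $2^{-(k+2)}$, which is affordable precisely because $k$ is presented in unary and so contributes to the $(|x| + k)^c$ budget. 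Absorbing this error into the advantage, for instance by replacing $f(k)$ with $2^{-(k+2)}$, leaves the conclusion intact, so the only genuine care required is in bookkeeping the sampling precision against the parameter rather than in the probabilistic argument itself.
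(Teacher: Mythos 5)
Your argument is correct and essentially the paper's own: forward sampling followed by a randomized acceptance rule whose acceptance probability works out to $\frac{1}{2} + \frac{1}{2}(\text{Pr}(h)-q)$, so that the promise yields an advantage of at least $\frac{\epsilon}{2} \geq 2^{-(k+1)}$, exactly as in Proposition~\ref{prop:epinf}. Your extra paragraph on realizing rational coin biases within a strict polynomial time bound addresses a detail the paper's proof silently elides, and your truncation argument resolves it soundly.
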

\begin{proof}
By forward sampling the network (i.e.~generating outcomes according to the distributions of each variable, following some topological ordering of the graph) and accepting with probability $1-\frac{q}{2}$ if the sample agrees with $h$, and with probability $\frac{1}{2}-\frac{q}{2}$ if it does not, we arrive at a probability of acceptance of $\frac{1}{2} + \frac{1}{2}(\text{Pr}(h)-q)$. Under the promise that $\text{Pr}(h)\not\in(q-\epsilon,q+\epsilon)$, the probability of giving the correct answer is now at least $\frac{1}{2} + \frac{\epsilon}{2}$, hence the problem is in \textsf{pPPPT}.
\end{proof}

\begin{proposition}\label{prop:ephecinf}
$\{\epsilon, \textup{Pr}(h),\textup{Pr}(e)\}$-\textsc{Conditional Inference} is in \textup{\textsf{pPPPT}}.
\end{proposition}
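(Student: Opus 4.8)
The plan is to obtain \textsf{pPPPT} membership from Theorem~\ref{thm:PPPT} rather than by a direct sampling argument as in Proposition~\ref{prop:epinf}. I would first observe that the proof of Theorem~\ref{thm:PPPT} carries over verbatim to promise problems, since it only ever combines the two machines and never inspects instances outside the promise; its promise analogue therefore reduces the goal to placing $\{\epsilon,\textup{Pr}(h),\textup{Pr}(e)\}$-\textsc{Conditional Inference} in both \textsf{pPP} and the promise version \textsf{pparaBPP} of \textsf{paraBPP}.

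The \textsf{pPP} half is immediate. Since \textsc{Conditional Inference} is \textsf{PP}-complete and the question here, $\textup{Pr}(h\mid e)>q$, is literally that of \textsc{Conditional Inference}, restricting the corresponding probabilistic polynomial-time procedure to inputs meeting the promise yields a \textsf{pPP} algorithm. This step handles the exact threshold behaviour, including the regime of exponentially small $q$, at the cost of only an exponentially small success gap, which is all \textsf{PP} asks for.

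The substance lies in the \textsf{pparaBPP} half. Here I would forward-sample the network and use rejection sampling to condition on $e$: because $\textup{Pr}(e)$ is a parameter, a sample consistent with $e$ appears after $O(1/\textup{Pr}(e))=f_1(k)$ draws in expectation, so producing $m$ effective samples stays within the $f(k)\cdot\textup{poly}$ budget allowed by \textsf{paraBPP}. Each effective sample is then an independent $\textup{Bernoulli}(\textup{Pr}(h\mid e))$ trial, and the relative promise supplies a multiplicative $(1+\epsilon)$-gap separating $\textup{Pr}(h\mid e)$ from the threshold $q$. The parameter $\textup{Pr}(h)$ enters through the bound $\textup{Pr}(h\mid e)\le\textup{Pr}(h)/\textup{Pr}(e)$, which lets me dispose deterministically of the case $q\ge\textup{Pr}(h)/\textup{Pr}(e)$ (a forced \textit{No} under the promise) and otherwise work against a parameter-controlled threshold.

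The hard part will be the sample-complexity bookkeeping for this estimation: I must show that distinguishing $\textup{Pr}(h\mid e)\ge q(1+\epsilon)$ from $\textup{Pr}(h\mid e)\le q(1+\epsilon)^{-1}$ with a polynomial success gap $|x|^{-d}$ needs only $f(k)\cdot\textup{poly}(|x|)$ effective samples. A Chernoff analysis of the empirical frequency against $q$ yields the correct verdict once the number of effective samples is of order $1/(q\,\epsilon^{2})$ up to a logarithmic confidence factor, so the crux is to absorb the dependence on $q$ (and hence on $\textup{Pr}(h\mid e)$) entirely into a function of the parameters by means of the bounds above. Once this accounting is settled, Proposition~\ref{prop:PPPT} converts the resulting error bound into the form demanded by \textsf{paraBPP}, and the promise form of Theorem~\ref{thm:PPPT} then delivers the claimed \textsf{pPPPT} membership.
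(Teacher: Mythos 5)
Your overall route is exactly the paper's: the paper likewise derives membership from Theorem~\ref{thm:PPPT}, using the \textsf{PP}-membership of unparameterized \textsc{Conditional Inference} for one half and rejection sampling for the \textsf{paraBPP} half --- except that the paper disposes of the latter entirely by citing \cite{henrion}, whereas you attempt to prove it. (Your observation that Theorem~\ref{thm:PPPT} transfers to promise problems is a point the paper leaves implicit and is worth making explicit.)

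The one place where you go beyond the paper is also where your argument has a genuine gap. Separating $\textup{Pr}(h\mid e)\geq q(1+\epsilon)$ from $\textup{Pr}(h\mid e)\leq q(1+\epsilon)^{-1}$ by rejection sampling costs on the order of $1/(\textup{Pr}(e)\,q\,\epsilon^{2})$ forward samples, so what you need is a \emph{lower} bound on $q$ (equivalently, on $\textup{Pr}(h\mid e)$ in the \textit{Yes} case) in terms of the parameters. The bound you invoke, $\textup{Pr}(h\mid e)\leq\textup{Pr}(h)/\textup{Pr}(e)$, is an \emph{upper} bound: it lets you discard the case of large $q$, but says nothing about the troublesome regime where $q$ is, say, $2^{-|x|}$, in which case $f(k)\cdot\textup{poly}(|x|)$ effective samples cannot distinguish the two promise cases (both yield an empirical frequency of $0$ with high probability). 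Nothing in the parameter list $\{\epsilon,\textup{Pr}(h),\textup{Pr}(e)\}$ bounds $\textup{Pr}(h\mid e)$ from below --- take $h$ and $e$ nearly disjoint --- so the ``crux'' you explicitly defer is not in fact settled by ``the bounds above''. To close the argument you would need either an additional lower bound on $q$ in terms of the parameters, or to appeal, as the paper does, to the result of \cite{henrion} as a black box.
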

\begin{proof}
In \cite{henrion} it is shown that rejection sampling (which is forward sampling and dismissing the outcome if it does not agree with $e$) provides an algorithm which places $\{\epsilon, \text{Pr}(h),\text{Pr}(e)\}$-\textsc{Conditional Inference} in \textsf{paraBPP}.\footnote{Note that the parameter \text{Pr}(h) is only necessary here because we ask for a relative approximation: the same holds when considering \textsc{Inference} instead.} Because \textsc{Conditional Inference} is itself in \textsf{PP}, by Theorem~\ref{thm:PPPT} we conclude that the parameterized problem is in \textsf{pPPPT}.
\end{proof}

\section{Reductions and Completeness}\label{sec:redux}

At this point we wish to show that the two parameterized problems considered in the previous section are actually complete for the class \textsf{pPPPT}. In order to do this, we first determine which notion of reduction is the most suitable with respect to the class \textsf{PPPT}, after which we identify the machine acceptance problem for \textsf{pPPPT} and demonstrate its completeness for the class under these reductions. We then construct an explicit reduction from this problem to $\epsilon$-\textsc{Inference}, and in turn reduce the latter to $\{\epsilon, \text{Pr}(h),\text{Pr}(e)\}$-\textsc{Conditional Inference}, thereby establishing completeness for both of these problems.\\
\\
First of all, it is evident that while some form of parameterized reduction is required, the usual fpt-reductions are unsuitable because they allow the runtime to contain a factor superpolynomial in the parameter value and so \textsf{PPPT} is not closed under these. Furthermore, the reductions cannot be probabilistic either: while this is possible for \textsf{BPP} since the error can be reduced to constant using probabilistic amplification, mitigating the parameterized error bound is generally impossible without parameterized runtime (unless $\mathsf{P} = \mathsf{PP}$). Thus in this context it makes sense to consider the notion of a \textit{ppt-reduction}\footnote{As \cite{downeyfellows2} observes, the acronym `ppt' can be read equally well as either ``polynomial parameter transformation'' or ``parameterized polynomial transformation''.}, which was formally introduced in \cite{bodlaenderthomasseyeo}:

\begin{definition}\label{def:ppt}
A \emph{ppt-reduction} from $A$ to $B$ is a polynomial-time computable function $h: (x,k) \mapsto (x',k')$ such that there exists a polynomial $g$ with the property that $k' \leq g(k)$, and furthermore $(x,k)\in A$ if and only if $(x',k')\in B$. 
\end{definition}

However, we can remove the constraint that $k'$ is bounded by a polynomial in $k$, and instead simply demand that its value depends only on $k$.\footnote{I hereby express my gratitude to the anonymous reviewer who raised this point.} The resulting class of reductions is a slightly broader one for which we introduce the name \textit{pppt-reduction}, for which it is easy to see that \textsf{PPPT} is closed under pppt-reductions. Thus we would like to exhibit a parameterized problem which is complete for \textsf{PPPT} under pppt-reductions; yet here we run into the same issue as with the class \textsf{BPP}, namely that it may be impossible to effectively decide whether a given probabilistic Turing machine has a suitably lower-bounded probability to be correct on all possible inputs. Hence we have to add this explicit requirement in the form of a promise, which means we arrive at the machine acceptance problem stated below, and instead study completeness for the promise class \textsf{pPPPT}.\\
\\
\textsc{Error PTM Acceptance}\\
\textit{Input:} Two strings $x$ and $1^n$ and a probabilistic Turing machine $\mathcal{M}$.\\
\textit{Parameter:} A positive integer $k$.\\
\textit{Promise:} For any valid input to $\mathcal{M}$ and after any number of steps, the probability of acceptance does not lie between $\frac{1}{2} - 2^{-k}$ and $\frac{1}{2} + 2^{-k}$.\\
\textit{Question:} After $n$ steps, does $\mathcal{M}$ accept $x$ more often than it rejects?

\begin{proposition}\label{prop:pEPTMA}
\textsc{Error PTM Acceptance} is complete for \textup{\textsf{pPPPT}}.
\end{proposition}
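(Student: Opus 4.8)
The goal is to establish that \textsc{Error PTM Acceptance} is complete for \textsf{pPPPT}, which requires two things: membership in \textsf{pPPPT}, and hardness for \textsf{pPPPT} under pppt-reductions. My plan is to treat these separately, spending most of the effort on hardness since membership should follow almost immediately from the way the problem is phrased.

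\textbf{Membership.} First I would argue that \textsc{Error PTM Acceptance} is itself in \textsf{pPPPT}. The natural algorithm is a universal simulation: on input $(x, 1^n, \mathcal{M})$ with parameter $k$, simulate $\mathcal{M}$ on $x$ for $n$ steps, drawing fresh random bits as needed, and output whatever $\mathcal{M}$ outputs. Since $r_{\mathcal{M}} \leq \log|\mathcal{M}|$, each simulated step costs only polynomially many steps in the length of the encoding of $\mathcal{M}$, so the whole simulation runs in time polynomial in $|x| + n + |\mathcal{M}|$, i.e.~polynomial in the total input size. The promise guarantees that on its promise the acceptance probability is bounded away from $\frac{1}{2}$ by at least $2^{-k}$, so taking $f(k) = 2^{-k}$ gives exactly the acceptance gap required by Definition~\ref{def:PPPT}. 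This places the problem in \textsf{pPPPT} directly, with no amplification needed.

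\textbf{Hardness.} The harder and more interesting direction is showing that every problem $A \in \textsf{pPPPT}$ pppt-reduces to \textsc{Error PTM Acceptance}. Given such an $A$, by definition there is a probabilistic Turing machine $\mathcal{M}_A$, a computable function $f$, and a constant $c$ such that $\mathcal{M}_A$ halts on $(x,k)$ within $(|x|+k)^c$ steps and is correct with probability at least $\frac{1}{2} + f(k)$ on the promise. The reduction $h$ would map an instance $(x,k)$ of $A$ to the instance $(\,(x,k),\, 1^{(|x|+k)^c},\, \mathcal{M}_A\,)$ of \textsc{Error PTM Acceptance}, with the new parameter $k'$ chosen as a function of $k$ alone. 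The central point is to select $k'$ so that the target promise ``acceptance probability avoids $(\frac{1}{2} - 2^{-k'}, \frac{1}{2} + 2^{-k'})$'' is implied by the source gap $f(k)$: since $f$ is computable and $f(k) \leq \frac{1}{2}$, I can set $k' = \lceil -\log f(k) \rceil$, which depends only on $k$ and guarantees $2^{-k'} \leq f(k)$, so the gap of $\mathcal{M}_A$ suffices to meet the promise. Because $k'$ is a function of $k$ only (and need not be polynomially bounded), this is a legitimate pppt-reduction even though it might fail to be a ppt-reduction; this is precisely where the relaxation introduced earlier in the section earns its keep. The step number $n = (|x|+k)^c$ and the encoding of $\mathcal{M}_A$ are both computable in polynomial time, and correctness of the equivalence $(x,k) \in A \iff \mathcal{M}_A$ accepts more often than it rejects within $n$ steps is immediate from the definition of $\mathcal{M}_A$.

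\textbf{Main obstacle.} I expect the main subtlety to lie not in any calculation but in checking that the promise of \textsc{Error PTM Acceptance} is correctly transferred. The target promise is quantified over \emph{all} valid inputs to $\mathcal{M}$ and after \emph{any} number of steps, whereas the \textsf{pPPPT} guarantee for $\mathcal{M}_A$ speaks only about instances on the promise of $A$ and about the halting configuration after $(|x|+k)^c$ steps. I would need to argue that this mismatch is harmless for the reduction: the reduction maps only instances of $A$, and the fixed step count $n$ together with the single parameter $k'$ pins down exactly the behaviour that the \textsf{pPPPT} definition controls, so that on the image of the reduction the promise holds wherever the argument requires it. Making this alignment precise, and confirming that the computability of $f$ is all that is needed to render $k'$ effectively computable, is where I would concentrate the rigour.
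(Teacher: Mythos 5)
Your overall architecture matches the paper's exactly: the same direct simulation for membership, and for hardness the same reduction shape with step count $(|x|+k)^c$ and new parameter $k' = \lceil -\log f(k)\rceil$, together with the same observation that $k'$ depending only (and possibly non-polynomially) on $k$ is precisely why a pppt-reduction rather than a ppt-reduction is obtained. However, there is one concrete gap in your hardness direction, and it sits exactly where you yourself flagged the ``main obstacle'' without resolving it. You map $(x,k)$ to an instance whose machine is $\mathcal{M}_A$ itself. The promise of \textsc{Error PTM Acceptance} requires the acceptance probability to avoid $(\frac{1}{2}-2^{-k'},\frac{1}{2}+2^{-k'})$ after \emph{any} number of steps, and $\mathcal{M}_A$ does not satisfy this: its computation paths halt at different times, so the probability of being in an accepting state climbs from $0$ up to its final value of at least $\frac{1}{2}+f(k)$ and can sit inside the forbidden interval at intermediate steps. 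The produced instance then falls outside the promise of the target problem, and a \textsf{pPPPT} algorithm for \textsc{Error PTM Acceptance} is allowed to answer arbitrarily on it, so correctness of the reduction is not established by saying the fixed step count ``pins down'' the controlled behaviour.

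The missing idea, which is the one piece of actual construction in the paper's hardness argument, is to replace $\mathcal{M}_A$ by a machine $\mathcal{M}'$ that simulates $\mathcal{M}_A$ for \emph{exactly} $(|x|+k)^c$ steps and defers announcing its decision until that point (the paper then uses step count $(|x|+k)^a$ for some $a\geq c$ to absorb the simulation overhead). With this padding the acceptance probability is $0$ at every step before the decision --- which lies outside the forbidden interval since $f(k)\leq\frac{1}{2}$ forces $k'\geq 1$ --- and jumps directly to its final value, so the ``after any number of steps'' clause of the promise is met on the image of the reduction. Everything else in your write-up (the bound $r_{\mathcal{M}}\leq\log|\mathcal{M}|$ for the universal simulation, the computability of $k'$ from $k$, the inequality $2^{-k'}\leq f(k)$) is correct and aligns with the paper.
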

\begin{proof}
The problem \textsc{Error PTM Acceptance} is straightforwardly seen to lie in \textsf{pPPPT}, as it can be decided by running $\mathcal{M}$ for $n$ steps on input $x$ and accepting only if it halts in an accepting state. Based on the promise this has a probability of at least $\frac{1}{2}+2^{-k}$ of making the correct decision, while taking time polynomial in the input size, which satisfies the requirements for being in \textsf{PPPT}.\\
\\
In turn, presenting a reduction from a problem $A\in\mathsf{pPPPT}$ to \textsc{Error PTM Acceptance} may be done in the following way. Suppose that $f$, $c$ and $\mathcal{M}$ witness that $A\in\mathsf{pPPPT}$, i.e.~on input $(x,k)$ the machine $\mathcal{M}$ will run in time at most $(|x|+k)^c$ and accept or reject based on whether $(x,k)\in A$ with probability at least $\frac{1}{2}+f(k)$. Then we can construct $\mathcal{M}'$ in polynomial time which on input $(x,k)$ simulates the machine $\mathcal{M}$ for exactly $(|x|+k)^c$ steps, deferring the decision until then: this will itself take time at most $(|x|+k)^a$ for some constant $a\geq c$.\\
\\
Using the above, we find that $(x,k)\in A$ precisely when after $|(x,k)|^a$ steps $\mathcal{M}'$ accepts $(x,k)$ more often than it rejects, hence $(x,k)\in A$ if and only if $((x,k),1^{(|x|+k)^a},\mathcal{M}',\lceil-\log f(k)\rceil)\in\textsc{Error PTM Acceptance}$. Since the remaining data (i.e.~other than the machine $\mathcal{M}'$) can also be given in polynomial time, this describes a pppt-reduction as required to show that \textsc{Error PTM Acceptance} is complete for \textsf{pPPPT} under pppt-reductions.
\end{proof}

We can now show the problem $\epsilon$-\textsc{Inference} to be complete for \textsf{pPPPT} as well, by providing what is essentially a Cook-style construction of a Bayesian network from the probabilistic Turing machine specification and number of steps and the input which together make up an instance of \textsc{Error PTM Acceptance}. In contrast to the previous result, the reduction resulting from this construction is a ppt-reduction rather than a pppt-reduction. 

\begin{theorem}\label{thm:einfcomp}
$\epsilon$-\textsc{Inference} is complete for \textup{\textsf{pPPPT}}.
\end{theorem}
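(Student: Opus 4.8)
The plan is to establish hardness, since membership in \textsf{pPPPT} has already been shown in Proposition~\ref{prop:epinf}. By Proposition~\ref{prop:pEPTMA} it therefore suffices to give a reduction from \textsc{Error PTM Acceptance} to $\epsilon$-\textsc{Inference}. The idea is a Cook-style construction: from an instance $(x, 1^n, \mathcal{M}, k)$ I would build a Bayesian network $\mathcal{B}$ that simulates the computation of $\mathcal{M}$ on $x$ for exactly $n$ steps, in such a way that the marginal probability of a designated ``acceptance'' event equals the probability that $\mathcal{M}$ accepts $x$ within $n$ steps.

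Concretely, I would lay out a computation tableau with one layer of variables per time step $t \in \{0, \ldots, n\}$ and one cell per relevant tape position, of which there are at most $|x| + n$. To each time step I attach $r_{\mathcal{M}}$ root nodes, each a uniform Bernoulli variable representing the random bits queried at that step. Following the standard window technique, the distribution of the variable for cell $i$ at time $t+1$ is made to depend only on the cells $i-1$, $i$, $i+1$ at time $t$ together with the fresh random-bit nodes for step $t$; encoding the head position and state into the cell alphabet keeps every such dependency local, so the resulting graph is acyclic and of size polynomial in $|x| + n + |\mathcal{M}|$. The conditional probability tables deterministically implement the transition function of $\mathcal{M}$ given the local configuration and the sampled bits, so that the joint distribution over tableaux reproduces the distribution over length-$n$ computation paths. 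I would then take $H$ to be the state indicator in the final layer, with $h$ the assignment ``accepting state'', set $q = \frac{1}{2}$, and set $\epsilon = 2^{-k}$; we may assume without loss of generality that every length-$n$ run of $\mathcal{M}$ ends in an accepting or rejecting state, so that the rejection probability is $1 - \text{Pr}(h)$.

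With these choices $\text{Pr}(h)$ is exactly the probability that $\mathcal{M}$ accepts $x$ after $n$ steps, so the question $\text{Pr}(h) > \frac{1}{2}$ is precisely the \textsc{Error PTM Acceptance} question, and the promise $\text{Pr}(h) \notin (q - \epsilon, q + \epsilon) = (\frac{1}{2} - 2^{-k}, \frac{1}{2} + 2^{-k})$ is exactly the promise on $\mathcal{M}$. Since $\lceil -\log \epsilon \rceil = k$, the parameter is preserved, so the map is a ppt-reduction (indeed $k' = k$), which is what the statement claims; the network can clearly be written down in polynomial time.

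The main obstacle is making the probabilistic simulation faithful while keeping the dependency structure local: I expect the bookkeeping for the window tables and the alphabet encoding of symbol, state and head to be routine but somewhat laborious, and the crux is arguing the identity between $\text{Pr}(h)$ and the acceptance probability of $\mathcal{M}$ on $x$ in $n$ steps. This rests on the observation that the computation of $\mathcal{M}$ is Markovian, in that each step depends only on the current configuration and freshly drawn, independent random bits, which is exactly the property that allows the computation to be represented by a directed acyclic network with local conditional distributions.
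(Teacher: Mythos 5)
Your overall strategy matches the paper's: membership via Proposition~\ref{prop:epinf}, hardness via a Cook-style tableau reduction from \textsc{Error PTM Acceptance} (using Proposition~\ref{prop:pEPTMA}), with $q = \frac{1}{2}$, $\epsilon = 2^{-k}$ and the parameter preserved so that the map is a ppt-reduction. Where you diverge is in how locality of the conditional probability tables is achieved. The paper keeps the head position and machine state as separate nodes $TH_i$ and $MS_i$ and inserts a linear chain of gadget nodes $Y_{i,0},\ldots,Y_{i,n}$ per layer whose job is to scan the tape and deliver the symbol under the head to the successor layer; you instead fold state and head into the cell alphabet and use the classical Cook--Levin window, so each cell at time $t+1$ depends on three cells at time $t$ plus the $r_{\mathcal{M}} \leq \log|\mathcal{M}|$ fresh random bits, which also keeps every table polynomial. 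Both mechanisms are sound, and yours is arguably the more standard one. The one concrete loose end in your version is the final query: with the window encoding there is no single node holding the machine state --- it lives in whichever final-layer cell carries the head marker --- so ``the state indicator in the final layer'' is not a joint value assignment $h\in\mathrm{\Omega}(H)$ of the kind \textsc{Inference} requires; ``some final-layer cell contains an accepting composite symbol'' is a disjunction over cells and over tape symbols. You need either an extraction gadget (e.g.\ a deterministic chain $Z_0,\ldots,Z_n$ over the final layer that ORs together ``accepting state seen in cell $j$'', taking $H=\{Z_n\}$) or a normalization of $\mathcal{M}$ forcing the head to a fixed cell with a canonical symbol upon halting. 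The paper's explicit $MS_n$ node is precisely what makes its query a single-node assignment; that is the main thing its extra gadgetry buys over your encoding, and the missing step in your write-up is routine to supply.
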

\begin{proof}
Given an instance $(x,1^n,\mathcal{M},k)$ of \textsc{Error PTM Acceptance} we describe a ppt-reduction to $\epsilon$-\textsc{Inference} as follows. First, we construct the underlying graph of the Bayesian network $\mathcal{B}$ by stacking $n+1$ layers of nodes and connecting these using an intermediate gadget. Any such layer $i$ consists of $n+1$ nodes $X_{i,0},\ldots,X_{i,n}$ representing the potentially reachable cells of the machine tape, a pair of nodes $TH_i$ and $MS_i$ which track the current tape head position and machine state respectively, and a series of $r$ nodes $B_{i,1},\ldots,B_{i,r}$ which act as the random bits which the machine uses to determine its next step. This means that $\mathrm{\Omega}(X_{i,j})$ consists of the tape alphabet (including blanks), $\mathrm{\Omega}(TH_i) = \{0,\ldots,n\}$, $\mathrm{\Omega}(MS_i)$ is the set of machine states, and finally $\mathrm{\Omega}(B_{i,j}) = \{0,1\}$.\\
\\
Such a layer of nodes $i$ is connected to its successor through a gadget consisting again of $n+1$ nodes $Y_{i,0},\ldots,Y_{i,n}$, with the parents $\rho(Y_{i,0})$ of $Y_{i,0}$ being $TH_i$ and $X_{i,0}$ and $\rho(Y_{i,j+1}) = \{Y_{i,j},X_{i,j+1}\}$. These nodes $Y_{i,j}$ can be interpreted as storing the position of the tape head at step $i$ and reading off the tape until the correct cell $X_{i,j}$ is encountered, after which its symbol is copied and carried over all the way to $Y_{i,n}$. To achieve this, we require $\mathrm{\Omega}(Y_{i,j})$ to be the disjoint union of $\mathrm{\Omega}(TH_i)$ and $\mathrm{\Omega}(X_{i,j})$. Now the layer $i$ combined with its gadget is connected to the next one by setting $\rho(X_{i+1,j}) = \{X_{i,j},MS_i,TH_i,Y_{i,n},B_{i,1},\ldots,B_{i,n}\}$, $\rho(TH_{i+1}) = \{MS_i,TH_i,Y_{i,n},B_{i,1},\ldots,B_{i,n}\}$ and $\rho(MS_{i+1}) = \{MS_i,Y_{i,n},B_{i,1},\ldots,B_{i,n}\}$.\\
\\
We now have to assign probability distributions to each of these nodes such that they fulfill their intended purposes. First of all, the nodes $B_{i,j}$ are all uniformly distributed so that they may be correctly regarded as random bits. As for the first row, the remaining nodes are fixed to the first $n+1$ cells of the tape input, the tape head starting location and the initial state of the machine. All other nodes in the network have similar distributions which are deterministic given the values of their parents. In particular, $Y_{i,j} = X_{i,j}$ if $Y_{i,j-1} = j$ and $Y_{i,j} = Y_{i,j-1}$ otherwise (here $TH_i$ should be read for $Y_{i,-1}$), $X_{i+1,j} = X_{i,j}$ unless $TH_i = j$ in which case $MS_i, Y_{i,n}$ and $B_{i,1},\ldots,B_{i,n}$ together determine the symbol overwriting the previous one according to the transition function of $\mathcal{M}$, and in general the values of $TH_{i+1}$ and $MS_{i+1}$ follow from those of its parents based on this transition function as well.\\
\\
The reduction can now be straightforwardly expressed as follows: an instance $(x,1^n,\mathcal{M},k)$ is mapped to an instance $(\mathcal{B},MS_n,s_{accept},\frac{1}{2},k)$ of $\epsilon$-\textsc{Inference}, where $\mathcal{B}$ is constructed as above and $s_{accept}$ is the accepting state of $\mathcal{M}$. Then as required we have that after $n$ steps $\mathcal{M}$ accepts $x$ more often than it rejects if and only if $\text{Pr}(MS_n= s_{accept})\pm 2^{-k} \geq \frac{1}{2}$. Since $\mathcal{B}$ is of size polynomial in $n$ and $|\mathcal{M}|$ (in particular because the conditional probability distribution at every node is of polynomial size) and the parameter remains unchanged, this indeed describes a ppt-reduction, which completes the proof.
\end{proof}

In turn, we can reduce $\epsilon$-\textsc{Inference} to $\{\epsilon,\text{Pr}(h),\text{Pr}(e)\}$-\textsc{Conditional Inference}, thereby extending the hardness and hence completeness to the latter.

\begin{corollary}\label{cor:eheinfcomp}
$\{\epsilon,\text{Pr}(h),\text{Pr}(e)\}$-\textsc{Conditional Inference} is $\mathsf{pPPPT}$-complete.
\end{corollary}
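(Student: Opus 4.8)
The plan is to prove the two halves of completeness separately: membership in \textsf{pPPPT} has already been established in Proposition~\ref{prop:ephecinf}, so what remains is \textsf{pPPPT}-hardness. For this I would reduce from $\epsilon$-\textsc{Inference}, which is \textsf{pPPPT}-complete by Theorem~\ref{thm:einfcomp}; since \textsf{pPPPT} is closed under ppt-reductions and these compose, it suffices to exhibit a ppt-reduction from $\epsilon$-\textsc{Inference} to $\{\epsilon, \text{Pr}(h), \text{Pr}(e)\}$-\textsc{Conditional Inference}. The guiding observation is that the conditional problem with $E = \emptyset$ collapses to ordinary inference, since then $\text{Pr}(h \mid e) = \text{Pr}(h)$ and $\text{Pr}(e) = 1$; the real content of the reduction therefore lies not in the network structure but in converting the additive promise of $\epsilon$-\textsc{Inference} into the multiplicative promise of the conditional problem, while keeping all three of its parameters $\epsilon$, $\text{Pr}(h)$ and $\text{Pr}(e)$ controlled by the single source parameter $\lceil -\log \epsilon \rceil$.

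Taking $E = \emptyset$ immediately fixes $\text{Pr}(e) = 1$, so that parameter is constant, and makes the target question $\text{Pr}(h \mid e) > q$ agree with the source question $\text{Pr}(h) > q$. The remaining parameter $\text{Pr}(h)$ is the one not yet under control, since on \textit{No}-instances it may be arbitrarily close to $0$. To remedy this I would affinely rescale the hypothesis probability into a fixed subinterval of $(0,1)$. First, using a small chain of conjunction nodes of bounded in-degree (in the spirit of the gadget in the proof of Theorem~\ref{thm:einfcomp}), I would form a single indicator node whose value is $1$ exactly when the variables in $H$ realise $h$, so that its probability of being $1$ equals $\text{Pr}(h)$ while the network stays polynomial in size. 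Appending two fresh independent fair coins and a binary node $W$ having this indicator and the two coins as its only parents, I set $W$ to be true when either $h$ holds and the first coin is heads, or the first coin is tails and the second is heads. This yields $\text{Pr}(W) = \frac{1}{2}\text{Pr}(h) + \frac{1}{4} \in [\frac{1}{4}, \frac{3}{4}]$; taking the new hypothesis to be $W=1$ with threshold $q' = \frac{1}{2}q + \frac{1}{4}$ preserves the comparison, as $\text{Pr}(h) > q$ if and only if $\text{Pr}(W) > q'$, and now $\text{Pr}(W)$ is bounded away from both $0$ and $1$ by absolute constants.

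It then remains to pick the relative error $\epsilon'$ so that the source promise translates correctly. The additive gap of $\epsilon$ around $q$ becomes a gap of $\frac{1}{2}\epsilon$ around $q'$, and setting $\epsilon' = \epsilon/(2q')$ places the multiplicative window $(q'(1+\epsilon')^{-1}, q'(1+\epsilon'))$ so that \textit{Yes}-instances, satisfying $\text{Pr}(W) \geq q' + \frac{1}{2}\epsilon = q'(1+\epsilon')$, sit at or above its upper boundary, while \textit{No}-instances, satisfying $\text{Pr}(W) \leq q' - \frac{1}{2}\epsilon \leq q'(1+\epsilon')^{-1}$, sit at or below its lower boundary; hence the promise of the conditional problem is met with the answer preserved. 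Because $q' \in [\frac{1}{4}, \frac{3}{4}]$ we have $-\log \epsilon' = -\log \epsilon + O(1)$, so each of the three target parameters is bounded by $\lceil -\log \epsilon \rceil + O(1)$; the map is plainly polynomial-time computable, and we obtain a ppt-reduction as desired.

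The main obstacle is precisely this simultaneous accommodation of two competing demands: faithfully rewriting an absolute-error promise as a relative-error one, which naively ties the relative error to the unknown magnitude of $\text{Pr}(h)$, while ensuring that the newly parameterized quantities $\text{Pr}(h)$ and $\text{Pr}(e)$ stay bounded in terms of the source parameter alone. The coin-mixing gadget is what resolves this tension, since it pins $\text{Pr}(W)$ inside a constant interval and thereby decouples $\epsilon'$ from how small $\text{Pr}(h)$ might otherwise be; once this is in place, the boundary computations verifying the promise translation are routine. Combining the resulting ppt-reduction with the completeness of $\epsilon$-\textsc{Inference} from Theorem~\ref{thm:einfcomp} and the membership from Proposition~\ref{prop:ephecinf} yields the claimed \textsf{pPPPT}-completeness.
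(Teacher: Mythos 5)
Your argument is essentially correct and reaches the same conclusion, but it takes a genuinely different route from the paper's. The paper keeps the conditioning non-trivial: it adds a fresh fair coin $R$ as the new hypothesis and an evidence node $S$ with $\Pr(s)=\frac{1}{2}$, engineered so that $\Pr(r\mid s)=\frac{1}{2}+\frac{1}{2}\Pr(h)$; this pins the parameters $\Pr(h)$ and $\Pr(e)$ to exactly $\frac{1}{2}$ and shifts the threshold to $\frac{1}{2}+\frac{1}{2}q\geq\frac{1}{2}$, after which the additive-to-relative conversion is harmless. You instead collapse the evidence ($E=\emptyset$) and do all the work on the hypothesis side with a coin-mixing gadget forcing $\Pr(W)\in[\frac{1}{4},\frac{3}{4}]$; your boundary computations for $\epsilon'=\epsilon/(2q')$ are correct. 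Two caveats. First, the paper explicitly distinguishes \textsc{Inference} ($E=\emptyset$) from \textsc{Conditional Inference} ($E\neq\emptyset$), so an instance with empty evidence arguably is not an instance of the target problem at all; this is easily patched by conditioning on a single independent fair coin (giving $\Pr(e)=\frac{1}{2}$ and leaving $\Pr(h\mid e)=\Pr(h)$), but as written your reduction lands in the wrong problem under the paper's convention, and it also sidesteps the feature that makes the conditional problem interesting. Second, you claim a ppt-reduction on the grounds that $-\log\epsilon'=-\log\epsilon+O(1)$, but the target problem is parameterized by the value $\epsilon$ itself, not by $\lceil-\log\epsilon\rceil$; the paper points out that for exactly this reason its own reduction (which likewise has $\epsilon'=\Theta(\epsilon)=\Theta(2^{-k})$) is only a pppt-reduction and not a ppt-reduction. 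This does not damage the completeness claim, since \textsf{pPPPT}-hardness is established under pppt-reductions throughout, but your labelling of the reduction is off.
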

\begin{proof}
Given an instance $(\mathcal{B},H,h,q,\epsilon)$ of $\epsilon$-\textsc{Inference}, we can adjust $\mathcal{B}$ by building an inverse binary tree below the nodes in $H$, with terminal node $T_H$ being $h$ when $H=h$ and $\neg h$ otherwise. We then furthermore add an initial, uniformly distributed binary node $R$ and another binary node $S$ with parents $R$ and $T_H$, distributed as follows:
\begin{equation*}
  \text{Pr}(s \mid R, T_H) = \left \{
  \begin{aligned}
    &1 && \text{for}\ R = r, T_H = h\\
    &0 && \text{for}\ R = \neg r, T_H = h\\
    &\frac{1}{2} && \text{otherwise}
  \end{aligned} \right.
\end{equation*}
Now $\text{Pr}(r) = \text{Pr}(s) = \frac{1}{2}$, and moreover $\text{Pr}(r\mid s) = \frac{1}{2} + \frac{1}{2}\text{Pr}(h)$, hence we find that $\text{Pr}(r\mid s) \pm \frac{1}{2}\epsilon \geq \frac{1}{2} + \frac{1}{2}q$ if and only if $\text{Pr}(h)\pm\epsilon\geq q$. This therefore describes a pppt-reduction from $\epsilon$-\textsc{Inference} to $\{\epsilon,\text{Pr}(h),\text{Pr}(e)\}$-\textsc{Conditional Inference}, albeit not a ppt-reduction as an artefact of the particular choice of parameter value corresponding to $\epsilon$. The result then follows from Theorem~\ref{thm:einfcomp}.
\end{proof}

To conclude this section, we discuss a question which may have occurred to the reader, namely whether one could simplify this approach by avoiding the inference problems altogether and working instead with the following variant of \textsc{MajSat}, which is the satisfiability problem complete for \textsf{PP}.\\
\\
\textsc{Gap-MajSat}\\
\textit{Input:} A propositional formula $\varphi$.\\
\textit{Parameter:} A positive integer $k$.\\
\textit{Promise:} The ratio of satisfying truth assignments of $\varphi$ does not lie between $\frac{1}{2}-2^{-k}$ and $\frac{1}{2}+2^{-k}$.\\
\textit{Question:} Is $\varphi$ satisfied by more than half of its possible truth assignments?\\
\\
The issue here is that the canonical reduction from \textsc{Error PTM Acceptance} to \textsc{Gap-MajSat} requires a number of variables proportional to both $n$ and the size of the machine $\mathcal{M}$, hence the original margin of $\pm 2^{-k}$ will shrink by a factor in the input size. The resulting parameter for the \textsc{Gap-MajSat} instance will thus depend on the input size, which means this reduction is not even an fpt-reduction. This points to a phenomenon also observed in the \textsf{W}-hierarchy, where \textsf{W[SAT]} (which is defined in terms of a parameterized satisfiability problem) is believed to a proper subclass of \textsf{W[P]} (which is defined in terms of a parameterized circuit satisfiability or machine acceptance problem). That the reduction does work for the inference problems suggests that Bayesian networks do have the direct expressive power of Turing machines lacked by propositional formulas.

\section{Application of Results}\label{sec:applic}

Ultimately, one of the main open questions in the area of probabilistic computation is whether $\mathsf{P} = \mathsf{BPP}$. In contrast to the more famous open question whether $\mathsf{P} = \mathsf{NP}$, the generally accepted view is that \textsf{BPP} is likely to equal \textsf{P}, based on works such as \cite{impagliazzowigderson}. However, due to the lack of natural problems which are known to be \textsf{BPP}-complete, it has not been possible to focus efforts on proving a particular problem to lie in \textsf{P} in order to demonstrate the collapse of \textsf{BPP}. We believe that our work makes an important contribution in that it indirectly provides a problem which can play this part, namely $\epsilon$-\textsc{Inference}. This relies in part on the following proposition adapted from \cite{montoyamuller} which we hinted at earlier.

\begin{proposition}\label{prop:PPPTequiv}
$\mathsf{PPPT} \subseteq \mathsf{FPT}$ if and only if $\mathsf{P} = \mathsf{BPP}$.
\end{proposition}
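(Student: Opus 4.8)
The plan is to prove the biconditional in two directions, mirroring the structure of Proposition~5.1 in \cite{montoyamuller} but exploiting the characterization $\mathsf{PPPT} = \mathsf{paraBPP}\cap\mathsf{PP}$ from Theorem~\ref{thm:PPPT}. For the easy direction, I would assume $\mathsf{P} = \mathsf{BPP}$ and show $\mathsf{PPPT}\subseteq\mathsf{FPT}$. Take any $k$-$A\in\mathsf{PPPT}$. Since $\mathsf{PPPT}\subseteq\mathsf{paraBPP}$, there is a probabilistic machine deciding $k$-$A$ in time $f(k)|x|^c$ with error bounded away from $\frac{1}{2}$ by $|x|^{-d}$. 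The point is that slicing this parameterized problem at each fixed parameter value yields a family of \textsf{BPP} computations: for each fixed $k$ the runtime factor $f(k)$ is a constant, so the $k$-th slice lies in \textsf{BPP}. Under the assumption $\mathsf{P} = \mathsf{BPP}$, each slice is in \textsf{P}, and reassembling these deterministic slice-algorithms (paying the $f(k)$ factor in the runtime, which is permissible for \textsf{FPT}) places $k$-$A$ in \textsf{FPT}. Care is needed to ensure this reassembly is uniform, but the standard observation that $\mathsf{paraBPP}\subseteq\mathsf{FPT}$ follows from $\mathsf{P}=\mathsf{BPP}$ handles this.

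For the converse, I would assume $\mathsf{PPPT}\subseteq\mathsf{FPT}$ and derive $\mathsf{P} = \mathsf{BPP}$. The natural strategy is to take an arbitrary problem $L\in\mathsf{BPP}$ and produce a parameterized version landing in \textsf{PPPT}, so that the hypothesis forces it into \textsf{FPT}, from which one recovers $L\in\mathsf{P}$. The obvious candidate is the trivial parameterization: give $L$ some constant parameter, or parameterize so that the parameter carries no information. Since $L\in\mathsf{BPP}\subseteq\mathsf{PP}$ and the trivially-parameterized version is in \textsf{paraBPP}, Theorem~\ref{thm:PPPT} places it in \textsf{PPPT}. By hypothesis it is then in \textsf{FPT}, and with a constant parameter the $f(k)$ factor collapses to a constant, yielding an ordinary polynomial-time deterministic algorithm for $L$, hence $L\in\mathsf{P}$.

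The main obstacle I anticipate is the converse direction, specifically ensuring that the parameterization chosen for $L$ genuinely certifies membership in \textsf{PPPT} rather than merely in \textsf{paraBPP}. The crucial input is that $L\in\mathsf{BPP}$ entails $L\in\mathsf{PP}$ as an unparameterized problem, so that the intersection characterization applies; without invoking Theorem~\ref{thm:PPPT} one would have to argue membership in \textsf{PPPT} directly from the definition, verifying that the error bound $f(k)$ can be taken independent of a trivial $k$ while keeping the runtime genuinely polynomial in $|x|+k$. A secondary subtlety is that the deduction $\mathsf{FPT}\ni k\text{-}L \Rightarrow L\in\mathsf{P}$ must use that the parameter is bounded by a constant on the relevant instances, so that $f(k)|x|^c$ really is polynomial in $|x|$; I would make this explicit to avoid the objection that an \textsf{FPT} runtime with an unbounded parameter need not be polynomial.
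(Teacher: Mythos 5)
Your proof is correct, and one half of it diverges from the paper's in an interesting way. The direction $\mathsf{PPPT}\subseteq\mathsf{FPT}\Rightarrow\mathsf{P}=\mathsf{BPP}$ is essentially identical to the paper's: a constant parameterization of $L\in\mathsf{BPP}$ lands in $\mathsf{paraBPP}\cap\mathsf{PP}=\mathsf{PPPT}$ by Theorem~\ref{thm:PPPT}, and the $f(k)$ factor of the resulting fpt-algorithm collapses to a constant. For the direction $\mathsf{P}=\mathsf{BPP}\Rightarrow\mathsf{PPPT}\subseteq\mathsf{FPT}$, however, you factor through $\mathsf{PPPT}\subseteq\mathsf{paraBPP}$ and the equivalence $\mathsf{P}=\mathsf{BPP}\Leftrightarrow\mathsf{paraBPP}=\mathsf{FPT}$ (Proposition~5.1 of \cite{montoyamuller}), whereas the paper argues directly from the $\mathsf{PPPT}$ machine: given $(x,k)$ it tests whether $|x|\leq f(k)^{-1}$ for the error-bound function $f$, solving small instances by brute force (fpt time) and observing that on large instances the error bound $f(k)>|x|^{-1}$ already makes the polynomial-time machine a $\mathsf{BPP}$ machine, so that $\mathsf{P}=\mathsf{BPP}$ is applied once, uniformly. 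Your route is more modular and is in fact the one the paper itself uses later in Theorem~\ref{thm:approx}; the paper's route is self-contained and makes visible exactly where the error parameterization enters. One caution: your slicewise motivation (derandomize each fixed-$k$ slice separately and reassemble) is not itself a proof, since $\mathsf{P}=\mathsf{BPP}$ gives no uniform or even computable assignment of deterministic algorithms and polynomial degrees to slices; the deferral to the cited result is what carries the argument, and the underlying proof there proceeds by an instance split on $|x|$ versus a function of $k$ rather than by per-slice derandomization. You rightly flag the uniformity issue, so this is a matter of presentation rather than a gap.
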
 
\begin{proof}
Suppose $\mathsf{PPPT}\subseteq\mathsf{FPT}$, and let $A$ be an arbitrary problem in $\mathsf{BPP}$. Then certainly $A\in\mathsf{PP}$, and also $A\in\mathsf{paraBPP}$ for any constant parameterization, hence $A\in\mathsf{PPPT}$ by Theorem~\ref{thm:PPPT}. By assumption it follows that $A\in\mathsf{FPT}$, hence there is a deterministic algorithm for $A$ which runs in time $f(k)|x|^c$. But now the factor $f(k)$ is a constant term, which means $A$ is actually in \textsf{P} by this algorithm.\\
Conversely, suppose $\mathsf{P} = \mathsf{BPP}$, and let $A$ be an arbitrary problem in $\mathsf{PPPT}$ with corresponding error bound function $f(k)$. Given an instance $(x,k)$ of $A$ we can determine whether $|x| \leq f(k)^{-1}$: for the instances where this is true, the problem is in \textsf{FPT}, while it is in \textsf{BPP} for those where it is false. By assumption the latter problem is moreover in \textsf{P}, which means the entire problem $A$ is in \textsf{FPT}.
\end{proof}

Combined with Theorem~\ref{thm:einfcomp}, we arrive at the following result:

\begin{theorem}\label{thm:approx}
$\mathsf{P} = \mathsf{BPP}$ if and only if there exists an efficient deterministic absolute approximation algorithm for \textsc{Inference}, i.e.~a deterministic approximation which runs in time $f(\epsilon^{-1})|x|^c$ for some constant $c$ and computable function $f$.
\end{theorem}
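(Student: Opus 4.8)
The plan is to route the stated biconditional through the completeness of $\epsilon$-\textsc{Inference} for $\mathsf{pPPPT}$ (Theorem~\ref{thm:einfcomp}) together with the characterization of derandomization furnished by Proposition~\ref{prop:PPPTequiv}. The key bridging observation is that the parameter of $\epsilon$-\textsc{Inference} is $\lceil -\log\epsilon\rceil$, so that a running time of the form $f(\epsilon^{-1})|x|^c$ is nothing other than an $\mathsf{FPT}$ running time $g(k)|x|^c$ in this parameter, and conversely. Consequently the existence of a deterministic absolute $\epsilon$-approximation algorithm for \textsc{Inference} within the stated time bound is, up to routine bookkeeping, the same as membership of $\epsilon$-\textsc{Inference} in $\mathsf{pFPT}$. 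I would make this equivalence precise first, after which the theorem reduces to showing that $\epsilon$-\textsc{Inference} lies in $\mathsf{pFPT}$ if and only if $\mathsf{P} = \mathsf{BPP}$.

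For the implication from derandomization to approximation, I would assume $\mathsf{P} = \mathsf{BPP}$ and apply the promise-class version of Proposition~\ref{prop:PPPTequiv} to obtain $\mathsf{pPPPT} \subseteq \mathsf{pFPT}$. Since $\epsilon$-\textsc{Inference} is in $\mathsf{pPPPT}$ by Proposition~\ref{prop:epinf}, this at once places it in $\mathsf{pFPT}$. It then remains only to turn a deterministic $\mathsf{FPT}$ decision procedure for this promise problem into an approximation algorithm, which I would accomplish by probing the decision procedure at a family of candidate thresholds $q$ spaced on a grid of width comparable to $\epsilon$ and reading off, from the pattern of accept/reject answers, the value of $\text{Pr}(h)$ to within additive $\epsilon$.

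For the converse, given such an approximation algorithm I would run it at an accuracy slightly finer than $\epsilon$ and compare the returned estimate to $q$; under the promise $\text{Pr}(h) \notin (q-\epsilon, q+\epsilon)$ this decides $\epsilon$-\textsc{Inference} correctly and within the required time, so $\epsilon$-\textsc{Inference} is in $\mathsf{pFPT}$. Because $\epsilon$-\textsc{Inference} is $\mathsf{pPPPT}$-complete under ppt-reductions and $\mathsf{pFPT}$ is closed under such reductions, it follows that $\mathsf{pPPPT} \subseteq \mathsf{pFPT}$; viewing each decision problem as a promise problem with total promise then gives $\mathsf{PPPT} \subseteq \mathsf{FPT}$, whence $\mathsf{P} = \mathsf{BPP}$ by Proposition~\ref{prop:PPPTequiv}.

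The step I expect to be the main obstacle is the decision-to-approximation conversion in the first direction. The difficulty is that the $\mathsf{FPT}$ algorithm for $\epsilon$-\textsc{Inference} is guaranteed correct only on inputs meeting the promise $\text{Pr}(h) \notin (q-\epsilon, q+\epsilon)$, so probes at thresholds $q$ within $\epsilon$ of the true value of $\text{Pr}(h)$ may return arbitrary answers. I would therefore need to select the grid spacing and search pattern so that only boundedly many probes can land in this ambiguous band, ensuring that the reliable monotone block of accept-answers still localizes $\text{Pr}(h)$ to the desired accuracy. By contrast, the remaining ingredients---checking that the composed running times have the forms $f(\epsilon^{-1})|x|^c$ and $g(k)|x|^c$, and that the promise-class analogue of Proposition~\ref{prop:PPPTequiv} goes through unchanged---I expect to be routine.
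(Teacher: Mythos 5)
Your proposal is correct and follows essentially the same route as the paper: both directions are routed through the $\mathsf{pPPPT}$-completeness of $\epsilon$-\textsc{Inference} (Theorem~\ref{thm:einfcomp}) and Proposition~\ref{prop:PPPTequiv}, with the paper obtaining the forward direction via $\epsilon$-\textsc{Inference} $\in \mathsf{paraBPP}$ and the collapse $\mathsf{paraBPP} = \mathsf{FPT}$ under $\mathsf{P} = \mathsf{BPP}$, which is interchangeable with your appeal to the promise version of Proposition~\ref{prop:PPPTequiv}. The decision-to-approximation conversion you flag as the main obstacle (grid-probing the thresholds $q$, with only boundedly many probes landing in the ambiguous promise band) is a legitimate detail that the paper simply treats as immediate by identifying the approximation algorithm with an fpt-algorithm for the promise decision problem, so your extra care there is a refinement rather than a divergence.
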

\begin{proof}
Since $\epsilon$-\textsc{Inference} $\in\mathsf{paraBPP}$ and $\mathsf{paraBPP} = \mathsf{FPT}$ whenever $\mathsf{P} = \mathsf{BPP}$, this part of the equivalence is already established. By Theorem~\ref{thm:einfcomp} we know that $\epsilon$-\textsc{Inference} is \textsf{pPPPT}-complete, hence $\mathsf{PPPT}\subseteq\mathsf{FPT}$ if the problem has an fpt-algorithm, by which it follows from Proposition~\ref{prop:PPPTequiv} that $\mathsf{P} = \mathsf{BPP}$.
\end{proof}

At the same time, we can provide some indication as to the hardness of $\epsilon$-\textsc{Inference} by means of the framework of kernelization lower bounds, which is where the notion of a ppt-reduction originated. Here we consider the following reformulation, inspired by \cite{dell}, of a theorem by Drucker found in \cite{drucker}.

\begin{theorem}\label{thm:drucker}
If $A$ is an \textsf{NP}-hard or \textsf{coNP}-hard problem and $B$ is a parameterized problem such that there exists a polynomial-time algorithm which maps any tuple $(x_1,\ldots,x_t)$ of $n$-sized instances of $A$ to an instance $y$ of $B$ such that
\begin{enumerate}
\item if all $x_i$ are \textit{No}-instances of $A$, then $y$ is a \textit{No}-instance of $B$;
\item if exactly one $x_i$ is a \textit{Yes}-instance of $A$, then $y$ is a \textit{Yes}-instance of $B$;
\item the parameter $k$ of $y$ is bounded by $t^{o(1)}n^c$ for some constant $c$;
\end{enumerate}
then $B$ has no randomized (two-sided constant error) polynomial-sized kernels unless $\mathsf{coNP}\subseteq\mathsf{NP}/\mathsf{poly}$, collapsing the polynomial hierarchy to the third level.
\end{theorem}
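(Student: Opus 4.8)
The plan is to establish the kernelization lower bound by contraposition: assuming $B$ had a randomized polynomial-sized kernel, I would compose it with the map from the hypothesis to obtain a randomized distillation of $A$, and then appeal to Drucker's theorem to derive the collapse. Concretely, suppose $K$ is a polynomial-time algorithm sending an instance $(y,k)$ of $B$ to an instance $(y',k')$ with $|y'| + k' \le p(k)$ for some polynomial $p$, preserving membership in $B$ up to two-sided constant error. Feeding the output of the given map into $K$ then yields a single polynomial-time procedure which sends any tuple $(x_1,\ldots,x_t)$ of $n$-sized instances of $A$ to a string of length at most $p(k) \le p(t^{o(1)}n^c)$.

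The key point I would emphasise is the effect of the parameter bound in condition (3). Since $p$ has some constant degree $d$, we get $p(t^{o(1)}n^c) = t^{o(1)}\,\mathrm{poly}(n)$, because raising the subpolynomial factor $t^{o(1)}$ to the constant power $d$ leaves it subpolynomial in $t$. Hence the composed procedure compresses $t$ instances of $A$ into an output of size $t^{o(1)}\mathrm{poly}(n)$, which is sublinear in $t$ for every fixed $n$, while conditions (1) and (2) guarantee that the output lies in $B$ exactly when none resp.\ precisely one of the $x_i$ is a \textit{Yes}-instance, with all error absorbed into that of $K$. This is precisely a randomized OR-distillation of $A$ in Drucker's sense; crucially, nothing is demanded when two or more $x_i$ are \textit{Yes}-instances, which matches the weak distillation hypothesis that his argument is tailored to exploit.

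It then remains to invoke the distillation lower bound itself: a randomized OR-distillation of an \textsf{NP}-hard $A$ (equivalently, via complementation, an AND-distillation when $A$ is \textsf{coNP}-hard, which is why both cases appear) forces $\mathsf{coNP}\subseteq\mathsf{NP}/\mathsf{poly}$, whence the polynomial hierarchy collapses to its third level by Yap's theorem. The genuinely hard part is this distillation theorem, resting on Drucker's information-theoretic ``disguising distribution'' analysis; since it is available from \cite{drucker}, the only work left for this reformulation is to track the size and parameter bounds through the composition and to check that the constant two-sided error of $K$ stays within the tolerance of his randomized framework.
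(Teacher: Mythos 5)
The paper states this theorem without proof, presenting it only as a reformulation (following \cite{dell}) of a result imported from \cite{drucker}, so there is no in-paper argument to compare against. Your proposal is the standard and correct way to obtain it --- compose a hypothetical randomized polynomial kernel with the given map, observe that condition (3) keeps the output size at $t^{o(1)}\mathrm{poly}(n)$ since a constant power of $t^{o(1)}$ is still $t^{o(1)}$, and then invoke Drucker's randomized weak OR-/AND-distillation lower bound --- which is precisely the content the paper delegates to the citation.
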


We can use this Theorem to prove that $\epsilon$-\textsc{Inference} has no randomized polynomial-sized kernels unless the polynomial hierarchy collapses.

\begin{proposition}\label{prop:kernel}
$\epsilon$-\textsc{Inference} has no randomized polynomial-sized kernel unless $\mathsf{coNP}\subseteq\mathsf{NP}/\mathsf{poly}$. 
\end{proposition}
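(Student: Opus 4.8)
The plan is to apply Theorem~\ref{thm:drucker} with $B = \epsilon$-\textsc{Inference} and a suitable $\textsf{NP}$-hard (or $\textsf{coNP}$-hard) problem $A$. For this I need to construct a polynomial-time OR-composition that takes $t$ instances $x_1,\ldots,x_t$ of $A$, each of size $n$, and produces a single $\epsilon$-\textsc{Inference} instance $y$ whose parameter is bounded by $t^{o(1)}n^c$, and which is a \textit{Yes}-instance exactly when at least one $x_i$ is (conditions (1) and (2)). The natural choice for $A$ is a problem whose acceptance probability over a random witness is easy to control in a Bayesian network—for instance \textsc{Sat}, viewed as asking whether a formula has a satisfying assignment, which is $\textsf{NP}$-hard. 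Since the inference problem asks a question about $\text{Pr}(h)$, the strategy is to encode each $x_i$ as a subnetwork whose relevant marginal probability is bounded away from a threshold in one of two well-separated regimes depending on whether $x_i$ is a \textit{Yes}- or \textit{No}-instance.

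First I would fix the composition target. Given the $t$ instances, I build a Bayesian network $\mathcal{B}$ containing a uniformly distributed selector node $C$ with $\mathrm{\Omega}(C) = \{1,\ldots,t\}$ (requiring $\lceil\log t\rceil$ auxiliary uniform binary nodes to realize), together with $t$ parallel gadgets, the $i$-th of which computes an indicator $Z_i$ that is \textit{true} with probability reflecting whether $x_i$ is satisfiable. The output node $H$ is then wired so that $\text{Pr}(h)$ equals the average $\tfrac{1}{t}\sum_i p_i$, where each $p_i$ sits in one of two intervals: a low value if $x_i$ is a \textit{No}-instance and a high value if it is a \textit{Yes}-instance. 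With the threshold $q$ and margin $\epsilon$ chosen accordingly, the all-\textit{No} case gives $\text{Pr}(h)$ safely below $q-\epsilon$ while a single \textit{Yes} pushes it above $q+\epsilon$, satisfying conditions (1) and (2). The delicate point is condition (3): because $\text{Pr}(h)$ is an average over $t$ branches, a single satisfying branch only moves the marginal by a factor on the order of $1/t$, so the required separation forces $\epsilon$ to scale like $1/t$, and hence the parameter $\lceil-\log\epsilon\rceil$ grows like $\log t + O(\log n)$.

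The key observation that makes the argument go through is precisely that $\log t + O(\log n)$ lies comfortably within the allowed bound $t^{o(1)}n^c$: logarithmic dependence on $t$ is $t^{o(1)}$, and the $O(\log n)$ overhead from encoding each individual instance into its gadget contributes only a polynomial factor in $n$. So even though the averaging shrinks the additive margin to $\Theta(1/t)$, the \emph{parameter} of $\epsilon$-\textsc{Inference} is only the logarithm of the reciprocal margin, which is exactly why this problem, unlike \textsc{Gap-MajSat} with its multiplicative blow-up, admits a valid composition. Having verified the three conditions, Theorem~\ref{thm:drucker} immediately yields that $\epsilon$-\textsc{Inference} has no randomized polynomial-sized kernel unless $\mathsf{coNP}\subseteq\mathsf{NP}/\mathsf{poly}$, completing the proof.

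I expect the main obstacle to be the bookkeeping in the network construction that guarantees the marginal $\text{Pr}(h)$ lands in the two separated intervals with a margin of exactly the right order, while keeping every conditional probability table of polynomial size and ensuring the gap stays $\Theta(1/t)$ rather than degrading further (which would inflate $-\log\epsilon$ beyond $t^{o(1)}n^c$). Verifying that the single-\textit{Yes} case is correctly detected and that no interference between branches occurs—so that conditions (1) and (2) hold exactly and not merely approximately—is where the care is needed; the remainder is a routine invocation of Theorem~\ref{thm:drucker}.
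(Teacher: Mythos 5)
Your route is genuinely different from the paper's. The paper composes at the level of \textsc{Gap-MajSat}: given $\varphi_1,\ldots,\varphi_t$ on a common variable set $x_1,\ldots,x_n$, it forms $\psi = \bigvee_{i=1}^t \varphi_i \vee x_0$ for a fresh variable $x_0$, so that a majority of assignments satisfy $\psi$ if and only if some $\varphi_i$ is satisfiable, with gap $2^{-(n+1)}$ and hence parameter $n+1$ --- completely independent of $t$. It then applies Theorem~\ref{thm:drucker} to \textsc{Gap-MajSat} and transfers the lower bound to $\epsilon$-\textsc{Inference} via the standard formula-to-network ppt-reduction, using the fact from \cite{bodlaenderthomasseyeo} that the absence of polynomial kernels is preserved under ppt-reductions. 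You instead compose directly into a Bayesian network with a uniform selector over $t$ branches, accepting a $1/t$ degradation of the gap and absorbing it because the parameter is logarithmic in $1/\epsilon$. Both routes can be made to work; the paper's is cleaner because the disjunction trick avoids any dependence of the gap on $t$ and reuses existing transfer machinery rather than re-engineering a network.

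Two points in your plan need repair. First, your claim that the parameter grows like $\log t + O(\log n)$ presupposes that each gadget separates \textit{Yes}- from \textit{No}-instances of \textsc{SAT} by an inverse-polynomial margin. It cannot: the natural gadget samples an assignment and checks it, giving $p_i = 0$ for unsatisfiable $\varphi_i$ and $p_i \ge 2^{-n}$ otherwise, and no polynomial-size network can amplify this to $1/\mathrm{poly}(n)$ without placing \textsc{SAT} in \textsf{BPP}. The correct bound is $\log t + O(n)$, which still satisfies condition (3) of Theorem~\ref{thm:drucker} since $n \le n^c$, so your conclusion survives, but the stated reasoning is wrong. Second, the output node $H$ cannot take all $t$ gadget outputs (plus the selector) as parents, since its conditional probability table would have size $\Omega(2^t)$; you need a chain or tree of selector-controlled copy nodes to realize the mixture $\frac{1}{t}\sum_i p_i$ with polynomial-size tables --- you flag this concern but do not resolve it. Finally, your aside that \textsc{Gap-MajSat} suffers a ``multiplicative blow-up'' and hence admits no valid composition is mistaken: it is precisely the problem the paper composes into, and the parameter loss the paper discusses concerns the reduction from \textsc{Error PTM Acceptance}, not OR-composition of \textsc{SAT} instances.
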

\begin{proof}
Consider the \textsf{NP}-hard problem \textsc{SAT}, and let $\varphi_1,\ldots,\varphi_t$ be propositional formulas in $n$ variables. We can rename the variables so that every formula uses the same $x_1,\ldots,x_n$ if necessary, introduce a new variable $x_0$, and take the disjunction $\psi = \bigvee^t_{i=1}\varphi_i \vee x_0$. Then $\psi$ is a formula with $n+1$ variables with a majority of its truth assignments being satisfying if and only if at least one of the $\varphi_i$ is satisfiable, hence $(\psi,n+1)\in\textsc{Gap-MajSat}$ if and only if $\varphi_i\in\textsc{SAT}$ for some $i$. Thus by Theorem~\ref{thm:drucker} \textsc{Gap-MajSat} does not have randomized polynomial-sized kernels unless $\mathsf{coNP}\subseteq\mathsf{NP}/\mathsf{poly}$. Furthermore, by \cite{bodlaenderthomasseyeo} this property is closed under ppt-reductions, and the usual reduction from \textsc{Gap-MajSat} to $\epsilon$-\textsc{Inference} (which amounts to constructing a Boolean circuit out of the given formula) is in fact a ppt-reduction, hence neither does $\epsilon$-\textsc{Inference} have randomized polynomial-sized kernels under the assumption that $\mathsf{coNP}\not\subseteq\mathsf{NP}/\mathsf{poly}$.
\end{proof}

While perhaps unsurprising, this result serves in particular as a reminder that hard problems in \textsf{PPPT} such as $\epsilon$-\textsc{Inference} are not solvable by polynomial kernelization followed by a probabilistic (\textsf{PP}) algorithm.

\section{Closing Remarks}

In this paper we have explored the proposal made in \cite{kwisthout1,kwisthout2} of an alternative parameterized randomized complexity class, which we have called \textsf{PPPT} and of which we have shown that it is identical to the intersection of $\textsf{PP}$ and $\textsf{paraBPP}$. In the preceding sections we showed that the problem $\epsilon$-\textsc{Inference} is a natural fit for this class, as it is not only a member of the class in a straightforward way (Proposition~\ref{prop:epinf}), it is moreover complete for the corresponding problem class (Theorem~\ref{thm:einfcomp}). Because of the close relation between classical and parameterized probabilistic computation (Proposition~\ref{prop:PPPTequiv}), the class \textsf{PPPT} turns out to have unexpected broader relevance, as finding an efficient deterministic absolute approximation algorithm for \textsc{Inference} is necessary and sufficient for the derandomization of \textsf{BPP} to \textsf{P} (Theorem~\ref{thm:approx}).\\
\\
In other words, we are in the fortunate circumstances where efforts to address a long-standing open question originating in theory can actually coincide with the search for a novel algorithm capable of solving a practical problem, and most importantly the existence of such an algorithm actually follows from a conjecture supported by other considerations. With this paper we wish to call attention to this opportunity for researchers with theoretical and practical motivations alike to engage with a challenge which is broadly relevant to multiple research communities at once. It is our hope that such focused efforts on the $\epsilon$-\textsc{Inference} problem may lead to a valuable breakthrough in both the fields of structural complexity theory and of probabilistic graphical models.

\subsection*{\ackname}
The author thanks Johan Kwisthout and Hans Bodlaender for sharing insightful remarks in his discussions with them, and also Ralph Bottesch for providing useful comments on an early draft of this paper.

\bibliographystyle{./splncs04}
\bibliography{references}

\begin{thebibliography}{10}
\providecommand{\url}[1]{\texttt{#1}}
\providecommand{\urlprefix}{URL }
\providecommand{\doi}[1]{https://doi.org/#1}

\bibitem{bodlaenderthomasseyeo}
Bodlaender, H.L., Thomass\'{e}, S., Yeo, A.: Kernel bounds for disjoint cycles
  and disjoint paths. Theoretical Computer Science  \textbf{412},  4570--4578
  (2011)

\bibitem{caichendowneyfellows}
Cai, L., Chen, J., Downey, R.G., Fellows, M.R.: On the structure of
  parameterized problem in {NP}. In: Enjalbert, P., Mayr, E.W., Wagner, K.W.
  (eds.) Proceedings of STACS 94. pp. 507--520 (1994)

\bibitem{chauhanrao}
Chauhan, A., Rao, B.V.R.: Parameterized analogues of probabilistic computation.
  In: Ganguly, S., Krishnamurti, R. (eds.) Algorithms and Discrete Applied
  Mathematics. pp. 181--192 (2015)

\bibitem{chenflumgrohe2}
Chen, Y., Flum, J., Grohe, M.: Machine-based methods in parameterized
  complexity theory. Theoretical Computer Sciences  \textbf{339},  167--199
  (2005)

\bibitem{dell}
Dell, H.: {AND}-compression of {NP}-complete problems: Streamlined proof and
  minor observations. Algorithmica  \textbf{75},  403--423 (2016)

\bibitem{downeyfellows1}
Downey, R.G., Fellows, M.R.: Parameterized Complexity. Springer (1999)

\bibitem{downeyfellows2}
Downey, R.G., Fellows, M.R.: Fundamentals of parameterized complexity. Springer
  (2013)

\bibitem{drucker}
Drucker, A.: New limits to classical and quantum instance compression. Tech.
  Rep. TR12-112, Electronic Colloquium on Computational Complexity (ECCC)
  (2014), \url{http://eccc.hpi-web.de/report/2012/112/}

\bibitem{flumgrohe}
Flum, J., Grohe, M.: Describing parameterized complexity classes. Information
  and Computation  \textbf{187},  291--319 (2003)

\bibitem{gill}
Gill, J.: Computational complexity of probabilistic {Turing} machines. SIAM
  Journal on Computing  \textbf{6}(4),  675--695 (1977)

\bibitem{goldreich}
Goldreich, O.: On promise problems: A survey. In: Goldreich, O., Rosenberg,
  A.L., Selman, A.L. (eds.) Theoretical Computer Science: Essays in Memory of
  Shimon Even, pp. 254--290. Springer (2006)

\bibitem{henrion}
Henrion, M.: Propagating uncertainty in {Bayesian} networks by probabilistic
  logic sampling. In: Lemmer, J.F., Kanal, L.N. (eds.) Uncertainty in
  Artificial Intelligence, Machine Intelligence and Pattern Recognition,
  vol.~5, pp. 149--163 (1988)

\bibitem{impagliazzowigderson}
Impagliazzo, R., Wigderson, A.: P = {BPP} if {E} requires exponential circuits:
  Derandomizing the {XOR} lemma. In: Proceedings of STOC '97. pp. 220--229
  (1997)

\bibitem{kollerfriedman}
Koller, D., Friedman, N.: Probabilistic graphical models: principles and
  techniques. MIT Press (2009)

\bibitem{kwisthout1}
Kwisthout, J.: Tree-width and the computational complexity of {MAP}
  approximations in {Bayesian} networks. Journal of Artificial Intelligence
  Research  \textbf{53},  699--720 (2015)

\bibitem{kwisthout2}
Kwisthout, J.: Approximate inference in {Bayesian} networks: parameterized
  complexity results. International Journal of Approximate Reasoning
  \textbf{93},  119--131 (2018)

\bibitem{marx}
Marx, D.: Parameterized complexity and approximation algorithms. The Computer
  Journal  \textbf{51},  60--78 (2008)

\bibitem{montoyamuller}
Montoya, J.A., M\"{u}ller, M.: Parameterized random complexity. Theory of
  Computing Systems  \textbf{52},  221--270 (2013)

\bibitem{parkdarwiche}
Park, J.D., Darwiche, A.: Complexity results and approximation strategies for
  {MAP} explanations. Journal of Artificial Intelligence Research  \textbf{21},
   101--133 (2004)

\end{thebibliography}

\end{document}